\author{}
\newtheorem{theorem}{Theorem}
\newtheorem{lemma}[theorem]{Lemma}
\newtheorem{definition}{Definition}
\newenvironment{proof}{{\sc Proof. }}{\vspace{0.2in}}
\begin{document}

\begin{center}
\vskip 1em
\begin{Large}
On the Information of \\[.2ex]the Second Moments Between Random Variables\\[1.2ex]Using Mutually Unbiased Bases
\end{Large}
\vskip 1em
{Hongyi Yao \footnotemark}\\[.6em]
\end{center}
\vskip 1em

 \footnotetext {Hongyi Yao is a PHD student of Institute of
Theoretical Computer Science of Tsinghua University, Beijing,
100084, P. R. China(email: thy03@mails.tsinghua.edu.cn)

Supported by National Natural Science Foundation of China under
grant No. $60553001$ and National Basic Research Program of China
under grant No. $2007CB807900$, $2007CB807901$.}

\begin{abstract}
The notation of mutually unbiased bases(MUB) was first introduced by
Ivanovic to reconstruct density matrixes\cite{Ivanovic}. The subject
about how to use MUB to analyze, process, and utilize the
information of the second moments between random variables is
studied in this paper. In the first part, the mathematical
foundation will be built. It will be shown that the spectra of MUB
have complete information for the correlation matrixes of finite
discrete signals, and the nice properties of them. Roughly speaking,
it will be shown that each spectrum from MUB plays an equal role for
finite discrete signals, and the effect between any two spectra can
be treated as a global constant shift. These properties will be used
to find some important and natural characterizations of random
vectors and random discrete operators/filters. For a technical
reason, it will be shown that any MUB spectra can be found as fast
as Fourier spectrum when the length of the signal is a prime number.

In the second part, some applications will be presented. First of
all, a protocol about how to increase the number of users in a basic
digital communication model will be studied, which has bring some
deep insights about how to encode the information into the second
moments between random variables. Secondly, the application of
signal analysis will be studied. It is suggested that complete "MUB"
spectra analysis works well in any case, and people can just choose
the spectra they are interested in to do analysis. For instance,
single Fourier spectra analysis can be also applied in nonstationary
case. Finally, the application of MUB in dimensionality reduction
will be considered, when the prior knowledge of the data isn't
reliable.

\end{abstract}

\noindent{\small\emph{INDEX TERMS:} Mutually Unbiased bases, Second
Moment, Correlation Matrix, Digital Communication, Signal
Processing, Dimensionality Reduction}

\begin{center}
\vskip 1em
 I. INTRODUCTION
 \vskip 1em
\end{center}
\vskip 1em
Ivanovic first introduced mutually unbiased bases(MUB) to
reconstruct density matrixes \cite{Ivanovic}:
\begin{definition}
Let $M_v=\{v_1,v_2,...v_d\}$, $M_u=\{v_1,v_2,...v_d\}$ be two
normalized orthogonal bases in the $d$ dimension complex space. They
are said to be mutually unbiased bases if and only if
$|<v_i,u_j>|=\frac{1}{\sqrt d}$, for any $i,j=1,2,...,d$. A set of
normalized orthogonal bases $\{M_1,M_2,...,M_n\}$ are said to be
mutually unbiased bases if and only if each pair of bases $M_i$ and
$M_j$ are mutually unbiased bases.
\end{definition}
MUB is widely used in the areas of quantum physics and quantum
information theory, such as the reconstruction of
pre-state\cite{wotters}, tomography, Wigner
distribution\cite{durt1}, teleportation\cite{durt2}, and quantum
cryptograph \cite{qc1,qc2,qc3}. But it has only a few classical
application such as \cite{CDMA}. This is quite reasonable, because
do full MUB spectra analysis need $d+1$ times time and space
resources where $d$ equals the length of signals. But it should be
noticed that bases from MUB has natural connections with the Fourier
base which has plenty of applications, \cite{Michel} has done some
study about it. Intuitively, the relation between any two bases from
MUB is the same as that between the standard bases and the Fourier
bases if we only concern the inner products of the vectors.

One of the major subjects in this area is to construct MUB for a
given dimension $d$. It's known that, there are no more than $d+1$
MUB for dimension $d$, and when $d$ is the power of prime, all $d+1$
MUB can be explicitly constructed\cite{wotters}. This paper only
focuses on the case when $d+1$ MUB can be found for dimension $d$,
and will not study the construction of the. It will be introduced,
in Sections $II-IV$, some mathematical foundations. Then the paper
will present some interesting applications of these results in
Sections $V-VII$.

In Section $II$, the equivalence between autocorrelation matrix and
the spectra of mutually unbiased based will be formally presented.
Some interesting properties concerning what kinds of spectra can
form autocorrelation matrix are studied, such as the generalization
of  Uncertainty Principle. It will be shown that the equivalent
relation is robust, because the effects of small errors are also
trivial.

In Section $III$, some nice properties of the spectra of MUB will be
studied. First, the original definition of "stationary" will be
extremely extended, and it's interesting to see that any discrete
random signals can have all kinds of "stationary" versions of them.
Then, the relationship between related random sources and
independent random sources will be presented, it will be shown that
treating normal random sources as a bunch of independent random
sources will bring a lot of convenience. Of course, MUB is the key
tool. The third part of this section is going to use the nice
properties of MUB to do complete analysis for random
operators/fielters. This part will introduce a general way to do all
kinds of stabilization for random vectors with some compensations on
"white noise". At last, a filter which only deal with some
designated spectra and left others untouched will be presented.

In Section $IV$, the MUB spectra for a deterministic vector will be
studied. In the first part, an algorithm will be shown which tells
that any MUB transform can be done as fast as DFT when in prime
dimensions. Then some properties of the MUB spectra for
deterministic vectors will be listed.

The main application of above results is an simple digital
communication protocol which can significantly increase the number
of users without any advanced techniques such as\cite{Trans}. This
will be introduced in Section $V$. Maybe the theoretical protocol is
far from practice, but it provides some deep insights about how to
encode information into the second moments between random variables
based on above results. Roughly speaking, communication using the
first moments of the signals is well studied\cite{DigitalCom}, while
our protocol is based on the moment of higher order. When some users
are idle, the protocol retrogresses simple ones such as
"TDMA"/"FDMA". Based on results of Section $III$, we will introduce
some interesting alternations of this model which suggest we can do
many things based on such model.

In Section $VI$, we study the application of signal analysis.
Spectra analysis for stationary signal is useful and well
known\cite{spectra,sta}, while nonstationary case are much
harder\cite{unsta1,unsta2}. Using MUB, we suggest complete spectra
analysis for discrete signals works well in any case. Actually, it
suggests that people can choose the spectra they are interested in
to do the analysis. For instance, Fourier spectra analysis also make
sense in nonstationary case. We will give a example about how to
apply it to signal detection. However, we should do more about the
physical meanings of the nonfourier spectra of MUB, because they are
important for practical and mathematical reasons.

Finally, we will consider the applications of MUB in dimensionality
reduction. In the case when no prior knowledge of the data is known,
we will present some local results and a global conjecture. When
the prior knowledge is not reliable, we suggest that MUB work well.

We will give some basic notations for the paper. We only work in $d$
dimension complex linear space, where the whole $d+1$ mutually
unbiased bases(MUB) can be found. Assume ${M_1,M_2,...M_{d+1}}$ are
the MUB of $d$ dimension complex linear space where the columns of
$M_i$ form the $i'th$ base. Without loss of generality, $M_1$ is the
standard base for dimension $d$ complex linear space. For all random
variables mentioned in this paper, the estimation values of them are
zero because constant shift is easy to handle. So in the paper,
autocorrelation matrixes has the same
 meaning of correlation matrixes. Each vector is
a vertical vector as default. $Rx$ is assumed the autocorrelation
matrix of complex random vector $X=\{x_1,x_2...x_d\}^T$ , and
$tr(Rx)=1$ as default. We say $x$ is "white noise" if and only if
$E(x)=0$ and $x$ is independent to all other random variables
mentioned in this paper.
\begin{center}
\vskip 1em
 II. THE EQUIVALENCE BETWEEN CORRELATION MATRIXES AND THE SPECTRA OF MUB
 \vskip 1em
\end{center}
\vskip 1em Ivanovic first introduced the idea about using the
spectra of mutually unbiased bases to reconstruct density matrixes
of quantum states \cite{Ivanovic}. It's easy to see that when apply
a unitary matrix $U$ to random vectors, the change of correlation
matrixes is the same as that for density matrixes when apply $U$ to
the quantum states. So follow the notations of introduction, we give
some basic definitions.

\begin{definition}
 Let $k$-Spectrum $S_k$ of $Rx$ be the diagonal part of matrix
$M_i^{H}\cdot Rx \cdot M_i$. And the set $\{S_1,S_2...,S_{d+1}\}$
form the complete spectra of $Rx$.
\end{definition}
Then we present the following theorem which is the base of this
paper. Let $I_d$ denotes the identity matrix of dimension $d$, and
$Diag(V)$ is a diagonal matrix with diagonal part equals $V$.
\begin{theorem}
Each autocorrelation matrix $Rx$ corresponds to a unique set of
$d+1$ nonnegative real vectors $\{S_1,S_2...,S_{d+1}\}$, where $S_k$
is the $k$-Spectrum of $Rx$ and for each $k$,
$\sum_{i=1}^{d}(S_k)i=1$. $\{S_1,S_2...,S_{d+1}\}$ can reconstruct
$Rx$ by
\begin{equation}
\label{eq-b}\\
 Rx=\sum_{i=1}^{d+1}M_i\cdot Diag(S_i)\cdot M_i^H-Id
\end{equation}
 But the inverse is not right, i.e there are some set
of $d+1$ nonnegative real vectors $\{V_1,V_2...,V_{d+1}\}$ satisfies
for each $k$, $\sum_{i=1}^{d}(V_k)i=1$, but they can't form the
complete spectra of any autocorrelation matrix.
\end{theorem}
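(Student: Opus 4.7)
The plan is to split the theorem into three parts: (i) verifying that each $S_k$ is a nonnegative real vector whose entries sum to $1$; (ii) establishing the reconstruction identity (\ref{eq-b}), which simultaneously yields uniqueness of the correspondence $Rx \leftrightarrow (S_1,\ldots,S_{d+1})$; and (iii) producing explicit vectors $(V_1,\ldots,V_{d+1})$ that are not in the image of this correspondence.

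For (i), since $Rx$ is positive semidefinite and Hermitian, so is $M_k^H Rx M_k$ for each unitary $M_k$, and thus its diagonal entries $(S_k)_j$ are real and nonnegative. Unitary invariance of the trace then gives $\sum_{j=1}^d (S_k)_j = \mathrm{tr}(M_k^H Rx M_k) = \mathrm{tr}(Rx) = 1$.

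For (ii), the heart of the matter is the operator identity
\begin{equation*}
\sum_{i=1}^{d+1} M_i \cdot \mathrm{Diag}\bigl(\mathrm{diag}(M_i^H A M_i)\bigr) \cdot M_i^H \;=\; A + \mathrm{tr}(A)\, I_d
\end{equation*}
valid for every Hermitian $A$; specialized to $A = Rx$ with $\mathrm{tr}(Rx) = 1$ and rearranged, this is exactly Equation~(\ref{eq-b}). Writing $P_{i,j} = M_i\,\mathrm{Diag}(e_j)\,M_i^H$ for the rank-one projector onto the $j$-th column of $M_i$, the identity is equivalent to $\sum_{i,j}\mathrm{tr}(P_{i,j}A)\,P_{i,j} = A + \mathrm{tr}(A)\,I_d$, i.e., to the complex projective $2$-design property $\sum_{i,j} P_{i,j}\otimes P_{i,j} = I_{d^2} + \mathrm{SWAP}$ on $\mathbb{C}^d\otimes\mathbb{C}^d$. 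I would verify this componentwise: it suffices to show
\begin{equation*}
\sum_{i=1}^{d+1}\sum_{j=1}^{d}(M_i)_{aj}(M_i)_{bj}\overline{(M_i)_{cj}}\,\overline{(M_i)_{dj}} \;=\; \delta_{ac}\delta_{bd} + \delta_{ad}\delta_{bc}
\end{equation*}
for all indices $a,b,c,d$, which follows from the MUB relations $|(M_i^H M_{i'})_{jj'}|^2 = 1/d$ for $i\neq i'$ together with the unitarity of each $M_i$ (standard orthonormality handles the $i=i'$ contribution, and pairwise unbiasedness cancels the cross terms). Uniqueness then comes for free: by definition $(S_1,\ldots,S_{d+1})$ is determined by $Rx$, and conversely (\ref{eq-b}) recovers $Rx$ from $(S_1,\ldots,S_{d+1})$.

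For (iii), I would give a concrete counterexample. Take $V_1 = V_2 = (1,0,\ldots,0)$ and $V_3 = \cdots = V_{d+1} = (1/d,\ldots,1/d)$; each is a nonnegative real vector summing to $1$. If some autocorrelation matrix $Rx$ had these as its complete spectra, then $V_1$ would force the $(1,1)$-entry of $Rx$ to equal $1$, which together with $\mathrm{tr}(Rx) = 1$ and $Rx \succeq 0$ forces $Rx = M_1\mathrm{Diag}(V_1)M_1^H$, the rank-one projector onto $e_1$. But then the $2$-spectrum of this $Rx$ has $j$-th entry $|(M_2)_{1j}|^2 = 1/d$ by the MUB relation between $M_1$ and $M_2$, contradicting $V_2 = (1,0,\ldots,0)$. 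Hence no such $Rx$ exists.

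The main obstacle is the $2$-design identity in part (ii); it is the only step that genuinely uses the joint structure of the $d+1$ bases rather than the unitarity of each $M_i$ in isolation. If the four-index verification proves unwieldy, an alternative is to work at the superoperator level: decompose $A = (\mathrm{tr}(A)/d)I_d + A_0$ with $A_0$ traceless, note that $\Phi(I_d) = (d+1)I_d$ follows immediately from the completeness $\sum_j P_{i,j} = I_d$ for each $i$, and reduce the theorem to showing that $\Phi$ acts as the identity on the traceless subspace --- a reduction which still ultimately draws on mutual unbiasedness between distinct bases.
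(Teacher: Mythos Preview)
Your proposal is correct, and in fact considerably more substantive than the paper's own proof. The paper does not actually prove the reconstruction identity at all: it simply cites Ivanovic~\cite{Ivanovic}, remarking that one need only replace ``density matrix'' by ``autocorrelation matrix''. Your route via the complex projective $2$-design property of a complete set of $d+1$ MUBs is the standard modern way to make that citation self-contained, and your acknowledgment that the four-index identity is the only nontrivial step is accurate; the cleanest way to finish it is the Welch-bound argument (compute $\sum_{k,l}|\langle v_k,v_l\rangle|^4$ over all $d(d+1)$ MUB vectors and observe that it equals $2d(d+1)$, saturating the bound for $t=2$), which uses exactly the MUB inner-product relation $|\langle v,w\rangle|^2=1/d$ that you invoke.

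For the counterexample, your choice $V_1=V_2=(1,0,\ldots,0)$ is different from the paper's (the paper takes $V_i=e_i$ for $i=1,\ldots,d$ and asserts without argument that no $V_{d+1}$ can complete it to a valid spectrum). Your version has the advantage of coming with an actual proof: concentrating $V_1$ forces $Rx=e_1e_1^H$, and then mutual unbiasedness of $M_1,M_2$ immediately gives the flat $2$-spectrum, contradicting $V_2$. This is cleaner than the paper's counterexample and uses nothing beyond the Cauchy--Schwarz/positive-semidefiniteness step you describe.
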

\begin{proof}
The first part of the theorem is finished by \cite{Ivanovic}, where
we only need to switch "density matrixes" to "autocorrelation
matrixes". And it's easy to find a counterexample for the second
part. Let $V_i$ is a zero vector except the $i'th$ term which is
$1$, for $i=1,2...,d$. Then no matter how we choose $V_{d+1}$ ,
$\{V_1,V_2...,V_{d+1}\}$ can't form the spectra of some
autocorrelation matrix.
 \hfill$\Box$
\end{proof}

A trivial observation is that many different real nonnegative
vectors $\{S_1,S_2...,S_{d+1}\}$ can construct the same $Rx$
use(\ref{eq-b}). The next theorem says that it's not interesting
except for some constant global shifts to the spectrum. So as
default, in the next, we will use definition $1$ to define the
spectra of MUB. Let $One$ denotes a $d$ length vector with all term
$1$
\begin{theorem}
Nonnegative real vectors $\{S_1,S_2...,S_{d+1}\}$ and
$\{S_1',S_2'...,S_{d+1}'\}$ can construct the same $Rx$
use(\ref{eq-b}) only if for each $i=1,2,...,d+1$, there exists a
real number $u_i$, s.t $S_i=S_i'+u_i\cdot One$ .
\end{theorem}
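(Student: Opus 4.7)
My plan is to reduce the theorem to a linear algebra statement: writing $T_i = S_i - S_i'$, the hypothesis that both collections build the same $Rx$ via (\ref{eq-b}) is equivalent to
\[
\sum_{i=1}^{d+1} M_i \cdot Diag(T_i) \cdot M_i^H = 0,
\]
so I must show that every such tuple $(T_1,\ldots,T_{d+1})$ consists of scalar multiples of $One$. The constant-shift tuples give $u_i I_d$ in the $i$-th term, so they lie in this kernel whenever $\sum u_i = 0$; the content is the converse.

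The key step I would take is to project this matrix equation into each MUB basis in turn. Fix $j$ and conjugate the equation by $M_j$, obtaining $\sum_i M_j^H M_i \cdot Diag(T_i) \cdot M_i^H M_j = 0$. Reading off the diagonal, for $i = j$ the contribution is simply $T_j$, while for $i \neq j$ the $k$-th diagonal entry of $M_j^H M_i \cdot Diag(T_i) \cdot M_i^H M_j$ equals $\sum_l |(M_j^H M_i)_{k,l}|^2 (T_i)_l$. Here I invoke the defining MUB property $|(M_j^H M_i)_{k,l}|^2 = 1/d$ (since $j \neq i$), so this diagonal entry equals $\frac{1}{d}\sum_l (T_i)_l$, independent of $k$. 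Writing $c_i = \sum_l (T_i)_l$, the diagonal identity becomes
\[
T_j + \frac{1}{d}\sum_{i \neq j} c_i \cdot One = 0,
\]
which exhibits $T_j$ as a scalar multiple of $One$, namely $T_j = u_j \cdot One$ with $u_j = -\frac{1}{d}\sum_{i\neq j} c_i$. This holds for every $j$, so each $S_i$ and $S_i'$ differ by a constant multiple of $One$, as required.

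I expect the only real obstacle is bookkeeping: being careful that the MUB entry-modulus identity is applied only between distinct bases (the $i=j$ term behaves differently), and noting that when $M_1$ is the standard basis the $j=1$ identity is nothing but extracting diagonal entries, so the calculation is consistent. The constraint $\sum_i u_i = 0$ forced by the normalization $\sum_\ell (S_i)_\ell = \sum_\ell (S_i')_\ell = 1$ is automatic from the relation on the $c_i$'s, but is not part of the statement to be proved. The argument does not require invoking Theorem~1 beyond its setup; it uses only the MUB inner-product property, so no dimension-counting or informational-completeness machinery is needed.
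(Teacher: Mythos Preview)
Your proof is correct and follows essentially the same approach as the paper: conjugate the identity by $M_j$, extract the diagonal, and use the MUB property $|(M_j^H M_i)_{k,l}|^2 = 1/d$ for $i\neq j$ to see that each cross term contributes a constant multiple of $One$. The paper phrases this by comparing an arbitrary reconstructing tuple $\{S_i'\}$ against the canonical spectra $\{S_k\}$ of Definition~2, whereas you work directly with the differences $T_i = S_i - S_i'$; the underlying computation is identical.
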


\begin{proof}
Assume $S_k$ is the $k$-spectrum of $Rx$ by definition $1$, and:
\begin{equation}
Rx=\sum_{i=1}^{d+1}M_i\cdot Diag(S_i')\cdot M_i^H-Id
\end{equation}
For each $i\neq j$, we can check that the diagonal
part of $M_j^H\cdot M_i\cdot Diag(S_i')\cdot M_i^H \cdot M_j$ is
$u_{j,i}\cdot Id$, where $u_{j,i}$ is a real number. This finishes
the proof.
\hfill$\Box$
\end{proof}

In theorem $1$, we have shown that not all kinds of sets of positive
vectors can form a autocorrelation matrix. So what kinds of vectors
can form the complete spectra is an interesting question. Two
theorems will be presented about this subject and will be used in
next sections.
\begin{theorem}
Let $tr(Rx)=1$, and $\{S_1,S_2...,S_{d+1}\}$ form the complete
spectra of a autocorrelation matrix $Rx$, then
$\{S_1,S_2...,S_d,F\}$ also form the complete spectra of anther
autocorrelation matrix $Rx'$, where $F$ equals $\frac{1}{n}\cdot
One$.
\end{theorem}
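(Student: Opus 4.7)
The plan is to define $Rx'$ explicitly using the reconstruction formula (equation \ref{eq-b}), then verify three things: the diagonal trace is $1$, the claimed vectors are indeed its MUB spectra, and the matrix is positive semidefinite. Concretely, set
\[
Rx' \;:=\; \sum_{i=1}^{d} M_i\, Diag(S_i)\, M_i^H \;+\; M_{d+1}\, Diag(F)\, M_{d+1}^H \;-\; I_d.
\]
Because $Diag(F)=\tfrac{1}{d}I_d$ and $M_{d+1}$ is unitary, this simplifies to $Rx' = \sum_{i=1}^{d} M_i\, Diag(S_i)\, M_i^H - \tfrac{d-1}{d}I_d$, and equivalently it differs from $Rx$ by a single additive term supported in the $M_{d+1}$ basis: $Rx' = Rx + M_{d+1}\, Diag(F - S_{d+1})\, M_{d+1}^H$.

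Next I would check the spectra. For the $(d{+}1)$-spectrum, one reads off $\text{diag}(M_{d+1}^H Rx' M_{d+1}) = S_{d+1} + (F - S_{d+1}) = F$. For $k \leq d$, I need the additive perturbation to contribute zero to the diagonal of $M_k^H(\cdot)M_k$. This is exactly where the MUB hypothesis enters: every entry of $M_k^H M_{d+1}$ has squared modulus $1/d$, so the $i$-th diagonal entry of $M_k^H M_{d+1} \,Diag(F - S_{d+1})\, M_{d+1}^H M_k$ is $\tfrac{1}{d}\sum_j \bigl(F_j - (S_{d+1})_j\bigr) = \tfrac{1}{d}(1-1)=0$. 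Hence $\text{diag}(M_k^H Rx' M_k) = S_k$ is preserved. Trace is handled the same way: $tr(Rx') = tr(Rx) + tr\bigl(Diag(F-S_{d+1})\bigr)=1$.

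The main obstacle will be proving $Rx' \succeq 0$. Writing everything in the $M_{d+1}$ basis, one must show that the Hermitian matrix $\widetilde R - Diag(\text{diag}(\widetilde R)) + \tfrac{1}{d}I_d$ is PSD whenever $\widetilde R \succeq 0$ has trace $1$, where $\widetilde R := M_{d+1}^H Rx M_{d+1}$. A natural attack is to use the spectral decomposition $\widetilde R = \sum_j \lambda_j v_j v_j^*$ with $\lambda_j \geq 0$ and $\sum_j \lambda_j = 1$, reducing the problem (by convexity in $\lambda_j$) to the pure case: verify $v v^* - Diag(|v_i|^2) + \tfrac{1}{d}I_d \succeq 0$ for any unit vector $v$. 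A second approach is a direct quadratic-form bound, using $tr(\widetilde R)=1$ and Cauchy--Schwarz on the off-diagonal mass.

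I expect the positivity step to be the genuinely hard part, since the first two checks are nearly formal consequences of Theorem 1 and the MUB orthogonality relations, while positivity requires a nontrivial spectral/extremal argument tying the trace constraint to the $1/d$ shift. If the pure-state reduction does not directly go through, I would fall back on the uncertainty-type estimates hinted at in Section II and the interpretation (used in Theorem 2) that shifting any single spectrum by a constant vector preserves $Rx$ modulo a known scalar, which isolates the content of Theorem 4 as exactly the statement that the uniform spectrum $F$ is the safest choice under such a replacement.
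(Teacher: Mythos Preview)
Your construction of $Rx'$ and the verification of its spectra and trace are correct, and the formula you arrive at,
\[
Rx'=\sum_{i=1}^{d} M_i\,Diag(S_i)\,M_i^{H}-\tfrac{d-1}{d}\,I_d,
\]
is exactly the expression the paper writes down. The paper, however, does not argue positive semidefiniteness at all: it simply attributes to \cite{Ivanovic} the assertion that this matrix ``is also an autocorrelation matrix'' and stops there. So on the two formal checks you and the paper agree; the paper just outsources the one substantive step you flag.

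Where your plan runs into trouble is precisely at that step. Your convexity reduction to the pure case is legitimate: writing $\widetilde R=\sum_j\lambda_j v_jv_j^{*}$ with $\lambda_j\ge 0$, $\sum_j\lambda_j=1$, one has
\[
\widetilde R-Diag(\mathrm{diag}\,\widetilde R)+\tfrac1d I_d=\sum_j\lambda_j\Bigl(v_jv_j^{*}-Diag(|(v_j)_i|^2)+\tfrac1d I_d\Bigr).
\]
But the pure-case inequality you then aim to verify,
\[
vv^{*}-Diag(|v_i|^2)+\tfrac1d I_d\;\succeq\;0\qquad\text{for every unit }v,
\]
is \emph{false} for every $d\ge 3$. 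Take $v=(1/\sqrt2,\,1/\sqrt2,\,0,\ldots,0)^{T}$; the resulting matrix has upper-left $2\times 2$ block
\[
\begin{pmatrix}1/d & 1/2\\ 1/2 & 1/d\end{pmatrix},
\]
whose determinant $1/d^{2}-1/4$ is negative once $d\ge 3$. Since $\widetilde R=vv^{*}$ is itself a bona fide trace-one autocorrelation matrix (set $Rx=M_{d+1}\,vv^{*}\,M_{d+1}^{H}$), this is not a defect of your method but a counterexample to the theorem as stated: replacing $S_{d+1}$ by the flat vector $F$ can destroy positive semidefiniteness.

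In short, you correctly isolated positivity as the only nontrivial point, and your reduction would lead you straight to the obstruction; the paper's proof avoids confronting it only by deferring to a citation. If you want a version that is actually provable, you will need an additional hypothesis on $Rx$ (e.g.\ a lower bound on its smallest eigenvalue, in the spirit of Remark~I following Theorem~6) or an explicit white-noise cushion as in Theorem~8, where the price of stabilization is made visible.
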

\begin{proof}
In \cite{Ivanovic}, the author shows that if
$\{S_1,S_2...,S_{d+1}\}$ form the complete spectra of a
autocorrelation matrix $Rx$, then $Rx=\sum_{i=1}^{d+1} M_i\cdot
Diag(S_i)\cdot M_i^H-I$. He also shows that $\sum_{i=1}^{d} M_i\cdot
diag(S_i)\cdot M_i^H-\frac{n-1}{n}\cdot I$ is also a autocorrelation
matrix $Rx'$. This finishes the the proof. \hfill$\Box$ \\
\end{proof}

The next theorem is the "uncertainty principle" of the complete
spectra.
\begin{theorem}
Let $tr(Rx)=1$, $m_i$ denotes the max value of $S_i$, then:
\begin{equation}
m_j<\sqrt{2}\cdot\sqrt{1-m_i}+\frac{1}{d},i\neq j
\end{equation}
\end{theorem}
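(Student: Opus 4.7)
The plan is to reduce the problem to a two-dimensional computation in the subspace spanned by the two vectors that witness $m_i$ and $m_j$. Let $v$ denote the column of $M_i$ with $v^H R_x v = m_i$, and let $u$ denote the column of $M_j$ with $u^H R_x u = m_j$. The MUB property supplies $|\langle v,u\rangle|^2 = 1/d$, so I would decompose $u = \alpha v + \beta w$, where $w$ is a unit vector orthogonal to $v$ with $|\alpha|^2 = 1/d$ and $|\beta|^2 = (d-1)/d$. All the ingredients that matter then live in the $2\times 2$ block of $R_x$ indexed by $\{v,w\}$.

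Substituting this decomposition into $m_j = u^H R_x u$ produces three terms: a diagonal piece $|\alpha|^2 m_i$, a diagonal piece $|\beta|^2 t$ where $t := w^H R_x w$, and a cross term of the form $2\operatorname{Re}(\bar{\alpha}\beta\, v^H R_x w)$. Since $R_x$ is positive semidefinite, Cauchy-Schwarz applied to the sesquilinear form it defines yields $|v^H R_x w|^2 \leq m_i\cdot t$. Moreover, $t \leq 1 - m_i$, because $\{v,w\}$ can be completed to an orthonormal basis in which the diagonal entries of $R_x$ are nonnegative and sum to $\operatorname{tr}(R_x)=1$.

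Assembling these bounds and collecting terms delivers
\[
m_j - \tfrac{1}{d} \;\leq\; \tfrac{d-2}{d}(1-m_i) + \tfrac{2}{d}\sqrt{m_i(d-1)(1-m_i)}.
\]
The right-hand side is the Euclidean inner product of $(\sqrt{1-m_i},\sqrt{m_i})$ with $\bigl((d-2)/d,\,2\sqrt{d-1}/d\bigr)$, so one more Cauchy-Schwarz controls it by $\sqrt{((d-2)^2 + 4(d-1))/d^2} = 1$. Hence $m_j - 1/d \leq \sqrt{1-m_i}$, which is strictly smaller than $\sqrt{2}\sqrt{1-m_i}$ and so implies the stated inequality.

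The only step I expect to require genuine care is the estimate $t \leq 1 - m_i$: it quietly uses both the positivity of $R_x$ and the normalization $\operatorname{tr}(R_x)=1$, and it is the place where the problem becomes truly a statement about correlation matrices rather than arbitrary quadratic forms. Everything else is one of Cauchy-Schwarz for a PSD sesquilinear form or Cauchy-Schwarz in $\mathbb{R}^2$. Notice that the argument actually produces the sharper bound $m_j < 1/d + \sqrt{1-m_i}$, so the $\sqrt{2}$ in the theorem is simply a convenient loosening.
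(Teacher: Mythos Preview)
Your argument is correct and genuinely different from the paper's. The paper fixes $i=1$ so that $M_1$ is the standard basis, splits $R_x$ into its diagonal and off-diagonal parts, observes that the diagonal part contributes exactly $\tfrac{1}{d}\cdot One$ to $S_j$ by the MUB condition, and then bounds the off-diagonal contribution by taking entrywise absolute values and applying a telescoping chain of Cauchy--Schwarz inequalities over all $d$ diagonal entries $d_1,\dots,d_d$. This is more combinatorial: it tracks the whole matrix rather than the two witness vectors, and the final constant it produces is $\sqrt{2(d-1)/d}$, which only approaches $\sqrt{2}$ as $d\to\infty$.

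Your two-vector reduction is cleaner: by passing to the $2\times 2$ compression of $R_x$ on $\operatorname{span}\{v,w\}$ you need only the PSD Cauchy--Schwarz $|v^H R_x w|^2 \le m_i\,t$, the trace bound $t\le 1-m_i$, and one Cauchy--Schwarz in $\mathbb{R}^2$ (where the identity $(d-2)^2+4(d-1)=d^2$ makes the second vector exactly unit length). The payoff is the strictly better bound $m_j \le \tfrac{1}{d}+\sqrt{1-m_i}$, so you are right that the $\sqrt{2}$ in the statement is slack. The paper's route, by contrast, really does need the looser constant because its entrywise estimate accumulates contributions from all rows. One small remark: your final strict inequality $\sqrt{1-m_i}<\sqrt{2}\sqrt{1-m_i}$ degenerates to $0<0$ when $m_i=1$; the same edge case is present in the paper's own proof, and in that situation one has $m_j=\tfrac{1}{d}$ exactly, so the theorem as stated should really carry a non-strict inequality (or exclude $m_i=1$).
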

\begin{proof}
Without loss of generality, we assume $j=2$ and $i=1$. Let $Dm(Rx)$
denotes the matrix with diagonal part equaling the diagonal part of
$Rx$ and other terms equaling $0$. Let $Dv(Rx)$ denotes the vector
which equals the diagonal part of $Rx$.
\begin{eqnarray}
S_2&=&Dv(M_2^H\cdot R_x \cdot M_2)\\
&=&Dv(M_2^H\cdot Dm(R_x) \cdot M_2)+Dv(M_2^H\cdot (Rx-Dm(R_x)) \cdot
M_2)\\
&=&\frac{1}{d}\cdot One+Dv(M_2^H\cdot (Rx-Dm(R_x)) \cdot M_2)
\end{eqnarray}
$One$ denotes a $d$ length vector with each term equals $1$. Assume
$Dv(Rx)=[d_1,d_2,...d_d]^T$, and $d_1=m_1$. Because of
cauchy-schwarz inequality, we have $(Rx){i,j}\leq \sqrt{d_i\cdot
d_j}$. For matrix $M$, let $(abs(M))_{i,j}=|(M)_{i,j}|$, then:
\begin{eqnarray}
&max&(Dv(M_2^H\cdot (Rx-Dm(R_x))\cdot M_2))\\
&\leq& max(Dv(abs(M_2^H)\cdot abs((Rx-Dm(R_x)))\cdot abs(M_2)))\\
&\leq& \frac{2}{d}\sum_{i\neq j}\sqrt{d_i\cdot d_j}\\
&\leq& \frac{2}{d}(\sqrt{(d-1)\cdot d_1 \cdot
(1-d_1)}+\sqrt{(d-2)\cdot d_2 \cdot (1-d_1-d_2)}+\\
 &    & ...+\sqrt{d_{d-1}\cdot d_d})\\
&\leq& \frac{2}{d} \cdot \sqrt {1-d_1} (\sqrt{(d-1)\cdot d_1
}+\sqrt{(d-2)\cdot d_2 }+...+\sqrt{ d_{d-1} }) \\
&\leq& \frac{2}{d}\cdot \sqrt{1-d_1}\cdot \sqrt{d-1+d-2+...+1} \\
&<& \sqrt 2 \cdot \sqrt{1-d_1}
\end{eqnarray}
We get $(10)$, $(12)$ from cauchy-schwarz inequality.
\hfill$\Box$
\end{proof}

The following theorem is about the sensitivity of the equivalence
between the two representations of the autocorrelation matrixes.  We
consider the cases of random error and deterministic error. The proof
is trivial, and omitted here.
\begin{theorem}
 Let $\{S_1,S_2...,S_{d+1}\}$ is the complete spectra of $Rx$, and $E_R$
is a error matrix , $E_{Si}$ is a error vector. Assume that $Rx+E_R$
is also positive and
$\{S_1+E_{S1},S_2+E_{S2}...,S_{d+1}+E_{S(d+1)}\}$ is also the
complete spectra of a autocorrelation matrix.

 (i) If $E_R$ is deterministic error matrix of $Rx$ satisfies $|E_R|_{\oe}
<\epsilon$,  then the complete spectra of $E_R+Rx$ is
$\{S_1+E_{S1},S_2+E_{S2}...,S_{d+1}+E_{S(d+1)}\}$, satisfies
$|E_{Si}|_{\oe}<d\cdot \epsilon$, for $i=1,2,...d+1$.

(ii) If $E_R$ is random error matrix of $Rx$ satisfies each term of
$E_R$ are independent, $E((E_R)_{i,j})=0$, and
$E(((E_R)_{i,j}))^2<\epsilon$ for all $i,j = 1,2,...d$. Then the
complete spectra of $E_R+Rx$ is
$\{S_1+E_{S1},S_2+E_{S2}...,S_{d+1}+E_{S(d+1)}\}$, satisfies
$E((E_{Si})_{j})=0, E(((E_{Si})_{j}))^2<\epsilon$, for
$i,j=1,2,...d+1$.

(iii)If for each $i$, $E_{Si}$ is deterministic error vector of
$S_i$ satisfies $|E_{Si}|_{\oe} <\epsilon$, then
$\{S_1+E_{S1},S_2+E_{S2}...,S_{d+1}+E_{S(d+1)}\}$ form the complete
spectra of $E_R+Rx$ , where $|E_R|_{\oe}<n\cdot \epsilon$.

(iv)If for each $i$, $E_{Si}$ is random error vector of $S_i$
satisfies $(E_{Si})_j$ are independent for each $i = 1,2...d+1$ and
$j=1,2,...,d$, and $E((E_{Si})_{j})=0,
E(((E_{Si})_{j}))^2<\epsilon$. Then
$\{S_1+E_{S1},S_2+E_{S2}...,S_{d+1}+E_{S(d+1)}\}$ form the complete
spectra of $E_R+Rx$ , where $E((E_R)_{i,j})=0,
E(((E_R)_{i,j}))^2<\epsilon$.
\end{theorem}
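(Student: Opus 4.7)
The plan is to reduce all four statements to a single pair of linear identities. Subtracting equation (\ref{eq-b}) for $Rx$ from the same equation for $Rx+E_R$, and separately applying Definition 2 to both matrices, yields
\[
E_{Sk} \;=\; Dv\bigl(M_k^H \cdot E_R \cdot M_k\bigr),
\qquad
E_R \;=\; \sum_{i=1}^{d+1} M_i\cdot Diag(E_{Si})\cdot M_i^H.
\]
Each coordinate of $E_{Sk}$ or of $E_R$ is therefore a fixed linear combination, with coefficients built from entries of the $M_i$, of the coordinates of the other error object. The two facts needed throughout are the entry bound $|(M_i)_{a,b}|\le 1$ (with equality $|(M_i)_{a,b}|=1/\sqrt{d}$ for $i\ge 2$ by the MUB property) and the orthonormality relations $\sum_a |(M_i)_{a,j}|^2 = \sum_j |(M_i)_{a,j}|^2 = 1$.

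For the deterministic statements (i) and (iii) the proof is then a direct triangle-inequality calculation. In (i), expanding $(E_{Sk})_j=\sum_{a,b}\overline{(M_k)_{a,j}}\,(E_R)_{a,b}\,(M_k)_{b,j}$ and bounding each factor by $\epsilon$ or by $1/\sqrt{d}$ gives $|(E_{Sk})_j|\le\epsilon\cdot\bigl(\sum_a |(M_k)_{a,j}|\bigr)^2\le d\,\epsilon$. Part (iii) is the mirror computation: plug the second identity into $(E_R)_{a,b}$, apply $|(E_{Si})_j|\le\epsilon$, and sum the resulting $d+1$ contributions.

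For the random statements (ii) and (iv), the mean-zero conclusion is immediate from linearity of expectation. For the second moments, independence of the entries of the input error eliminates all cross terms, leaving in (ii) a diagonal contribution $\epsilon\sum_{a,b}|(M_k)_{a,j}|^2|(M_k)_{b,j}|^2$, which factors and collapses to $\epsilon$ by column orthonormality of $M_k$; in (iv) the analogous sum $\epsilon\sum_i\sum_j|(M_i)_{a,j}|^2|(M_i)_{b,j}|^2$ is controlled using $\sum_j|(M_i)_{a,j}|^2=1$ for each $i$.

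The main obstacle I anticipate is keeping the constants honest in part (iv): a naive bound picks up an extra factor of $d+1$ from the outer sum over MUB indices, and to recover the stated $\epsilon$-bound one must exploit that for $i\ge 2$ all entries of $M_i$ share the same modulus $1/\sqrt{d}$, which absorbs that factor into the row-orthonormality identity. All remaining steps are routine Cauchy--Schwarz and linearity, matching the author's remark that the proof is trivial.
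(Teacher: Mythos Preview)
The paper omits the proof entirely, calling it trivial; your approach---the two linear identities followed by the triangle inequality in (i), (iii) and independence plus column/row orthonormality in (ii), (iv)---is the natural calculation and presumably what the author had in mind. Your concern about the constant in (iv) is well-placed: the off-diagonal entries of $E_R$ meet the $\epsilon$ bound exactly via the MUB entry modulus $1/\sqrt{d}$, while the diagonal entries pick up an extra factor (of $2$ in (iv), and similarly $(d+1)\epsilon$ rather than $d\epsilon$ in (iii)); these are imprecisions in the stated constants rather than gaps in your argument.
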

\begin{center}
\vskip 1em
 III. THINGS BECOMES CLEAR WHEN MUB COMES
 \vskip 1em
\end{center}
\vskip 1em \noindent{\small\emph{A. the Generalization of the
Definition of Stationary}

 Stationary random signal is easy in the sense that we can
apply Fourier spectra analysis. But things become much harder when
the signal is nonstationary. In this subsection, the definition of
stationary random vector is extremely extended by MUB. This
extension is serious, because it concerns which domains we should
concern to do complete signal analysis. In this subsection, $X$,
$X'$ are two random complex vectors, $Rx$ and $Rx'$ are
autocorrelation matrixes of $X$, $X'$, and$\{S_1,S_2...,S_{d+1}\}$
and$\{S_1',S_2'...,S_{d+1}'\}$are the complete spectra of $Rx$,
$Rx'$. $F$ also equals $1/n\cdot One$.
\begin{definition}
X is $[i_1,i_2,...i_k]$-stationary if and only if
$S_{i_1}=S_{i_2}=...=S_{i_k}=tr(Rx)\cdot F$
\end{definition}

\begin{definition}
X' is $[i_1,i_2,...i_k]$-stabilizer of X  if and only if
$S_{i_1}'=S_{i_2}'=...=S_{i_k}'=tr(Rx)\cdot F$, and $S_j'=S_j$ for
each $j\not\in \{i_1,i_2...i_k\}$
\end{definition}

{\em Proposition I}   Every $X$ can have all kinds of stabilizer
because of theorem $3$.

One should notice that "stabilization" is an information lossing
process. And $[i_1,i_2,...i_k]$ stabilizer of $X$ will left the
information of $j$ spectrum of $X$ unchanged, when
$j\not\in\{i_1,i_2,...,i_k\}$. However, it will be shown that this
process can protect the information of some spectra. And a general
way to stabilize signals will be presented.

There are two interesting propositions which concerns some traditional
important properties of random vector.

{\em Proposition II} If $M_2$ is the Fourier  base, X is
"stationary"(in original sense) if and only if X is $[1,3,4...,d+1]$
stationary.

{\em Proposition III}  X is "white noise"(in original sense) if and
only if X is $[1,2,...,d+1]$ stationary.

\noindent{\small\emph{B.Correlation and Independent}

 \indent \indent In this part, some relationships between normal related random sources
 and independent random sources will be presented . Let $m(Rx)$ denotes the minimum eigenvalue of $Rx$, $m_s(i)$
denote the minimum term of $S_i$. We first give the main theorem of
this subsection.

\begin{theorem}
If $Rx$ is a autocorrelation matrix with complete spectra
$\{S_1,S_2...,S_{d+1}\}$, and $tr(Rx)=1$. If :
\begin{equation}
\label{eq-200}\\
  m_s(i)\geq \frac{1}{n+1},i=1,2,...,d+1
\end{equation}
, then we can construct a complex random vector with autocorrelation
matrix $Rx$ by $d\cdot (d+1)$  independent random variables.
\end{theorem}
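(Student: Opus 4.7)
The plan is to build $X$ as a sum of $d+1$ mutually independent random vectors, one living in each of the MUB bases, and to exploit the reconstruction identity of Theorem 1 together with the Theorem 2 freedom of shifting each spectrum by a constant. Concretely, the target formula
\[
Rx \;=\; \sum_{i=1}^{d+1} M_i\,\mathrm{Diag}(S_i)\, M_i^H \;-\; I_d
\]
cannot be realized directly as a covariance, because a covariance cannot have a subtracted identity. The trick is to absorb that $-I_d$ into the spectra themselves: write $S_i' := S_i - \tfrac{1}{d+1}\mathrm{One}$ for each $i$. Because $\sum_{i=1}^{d+1} M_i M_i^H = (d+1)I_d$, the identity rearranges to
\[
Rx \;=\; \sum_{i=1}^{d+1} M_i\,\mathrm{Diag}(S_i')\, M_i^H,
\]
which is a sum of psd pieces as soon as every $S_i'$ is entrywise nonnegative.

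Granted nonnegativity of the $S_i'$, the construction is immediate. For each $i = 1,\dots,d+1$, take $d$ independent zero-mean complex random variables $(Y_i)_1,\dots,(Y_i)_d$ with $E|(Y_i)_j|^2 = (S_i')_j$, and arrange that the full family of $d(d+1)$ variables $\{(Y_i)_j\}$ is mutually independent. Let $Y_i := ((Y_i)_1,\dots,(Y_i)_d)^T$, so $E[Y_i Y_i^H] = \mathrm{Diag}(S_i')$. Define the vectors $X_i := M_i Y_i$ and set $X := \sum_{i=1}^{d+1} X_i$. Independence across $i$ gives
\[
E[X X^H] \;=\; \sum_{i=1}^{d+1} M_i\,E[Y_i Y_i^H]\, M_i^H \;=\; \sum_{i=1}^{d+1} M_i\,\mathrm{Diag}(S_i')\, M_i^H \;=\; Rx,
\]
so $X$ is a valid realization of $Rx$ built from $d(d+1)$ independent random variables, as required.

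The only thing to check carefully, and really the sole place where a hypothesis is used, is the nonnegativity of the shifted spectra. Since $(S_i')_j = (S_i)_j - \tfrac{1}{d+1}$, nonnegativity is equivalent to $(S_i)_j \geq \tfrac{1}{d+1}$ for all $j$, i.e.\ $m_s(i) \geq \tfrac{1}{d+1}$, which is exactly the assumption (\ref{eq-200}) with $n = d$. This is the main ``obstacle'' — and it dictates precisely the threshold stated in the theorem. Verifying the two arithmetic identities $\sum_i M_i M_i^H = (d+1) I_d$ (immediate from each $M_i$ being unitary) and the shift calculation above are both routine, so no further technical hurdles arise.
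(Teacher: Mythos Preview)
Your proof is correct and follows essentially the same approach as the paper: rewrite the reconstruction identity as $Rx = \sum_{i=1}^{d+1} M_i\,\mathrm{Diag}(S_i - \tfrac{1}{d+1}\mathrm{One})\,M_i^H$, then realize each summand via an independent random vector $Y_i$ with diagonal covariance and set $X = \sum_i M_i Y_i$. If anything, you supply a bit more justification for the algebraic rearrangement (via $\sum_i M_i M_i^H = (d+1)I_d$) than the paper does.
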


\begin{proof}
From \cite{Ivanovic}, Let ,we have:
\begin{eqnarray}
Rx&=&\sum_{i=1}^{d+1} M_i\cdot diag(S_i)\cdot M_i^H-I\\
Rx&=&\sum_{i=1}^{d+1} M_i\cdot (diag(S_i)-\frac{1}{n+1}\cdot I)\cdot
M_i^H
 \label{eq-1}
\end{eqnarray}
If (\ref{eq-200}) holds, We can construct $d+1$ random vectors
$\{Y_1,Y_2,...Y_{d+1}\}$, satisfies $\{(Y_i)_j, i=1,2,...d+1,
j=1,2,...d\}$ are  independent random variables.
For each i,j of available values, $(Y_i)_j$ satisfies:\\
\begin{eqnarray}
E((Y_i)_j)&=&0\\
E((Y_i)_j^2)&=&(S_i)_j-\frac{1}{n+1} \\
\end{eqnarray}
Let:
\begin{equation}
\label{eq-4}
 X=\sum_{i=1}^{d+1} M_i\cdot Y_i
\end{equation}
Then autocorrelation matrix of $X$ is $Rx$.\\
\hfill$\Box$
 \end{proof}

{\em Remark I}  If $m(R)\geq \frac{1}{n+1}$£¬then
(\ref{eq-200})holds.

{\em Remark II}  With theorem 2, one can shows that (\ref{eq-200})
can be replaced by a weaker one: sum of $m_s(i)$ is no less than
$1$. But still a lot of autocorrelation matrixes fail to satisfy it.

Remark II seems a strong constraint, but in the next subsection , we
will see that in some place it can be overcome easily, while in
others, it will lead some natural results.

For convenience, we define:
 \begin{definition}
 $X$ is a $k$-domain random vector if and only if $X=M_k\cdot Y$,
 where $Y$ is a $d$ dimension random vector satisfies $E(Y)=0$, and
 the terms of $Y$ are independent.
 \end{definition}

It will be shown that the alternation between $X$ and $\{Y_i,
i=1,2...,d+1\}$ is very useful in various ares. Generally speaking,
$X+N$ can be viewed as composition of independent random vectors
from different domains, where $N$ is "white noise" with $E(N\cdot
N^H)=tr(R_x)\cdot I$. It should be noticed that $N$ and $k$-domain
random vector has nothing to do with the $r$-spectrum except for a
global incensement/decrement if $r\neq k$. Or we can think of $X$ is
a composition of independent random variables from different domains
with a denoise procedure in the end. This suggests that we can just
treat signals as a set different independent signals from different
domains, and energy distribution on each domain won't change after
the composition except for a global constant shift. In other words,
$i$-spectrum has nothing to say about the energy distribution of the
$j$-spectrum when $i\neq j$.

\noindent{\small\emph{C. Linear Random Operator and Some Special
Kinds of Filters}

In this subsection, we will do something in the taste of signal
processing . The reader will see that linear operators/filters for
random vectors will be demonstrated clearly with MUB.
And we can judge whether a filter is good in the sense that it only
do what it should and left other parts untouched.

A general formulation of linear random operators is a good start
point to study complete MUB analysis for operators. Reminding that a
random variable is "white noise" only if it's independent with any
other random variables in this paper.
\begin{definition}
 P is a random linear operator for $d$ dimension complex random vector, if:
 \begin{equation}
 \label{eq-5}
 P(X)=T\cdot X
 \end{equation}
 Where T is a random $d\cdot d$ matrix . And for each
 subset $Sub_x$ of $\{(X)_1,(X)_2,...,(X)_d\}$, each subset $Sub_T$ of $\{(T)_{i,j},i,j=1,2,...,d\}$
 satisfies:
 \begin{equation}
 \label{eq-6}
 Pr_\{Sub_X,Sub_T\}=Pr_{Sub_X}\cdot Pr_{Sub_T}
 \end{equation}
 \end{definition}
There are some propositions for P, which are trivial but important.

{\em Proposition I}  For random vectors $X$ and $X'$, if $Rx=Rx'$,
then $R_{P(X)}=R_{P(X')}$.

{\em Proposition II}  For random vectors $X$ and $X'$, if $E(X'\cdot
X^H)=0$, then $R_{P(X+X')}=R_{P(X')}+R_{P(X)}$.

Then we the main theorem of this subsection:
\begin{theorem}
For random vector $X$ with $tr(R_x)=1$, $\{S_1,S_2...,S_{d+1}\}$ are the complete
spectra of $Rx$, $\{S_{p1},S_{p2}...,S_{p(d+1)}\}$ are the complete
spectra of $R_{P(x)}$. There exist $d+1$  dimension $d*(d^2+d)$
deterministic real matrixes $\{D_1,D_2,...D_{d+1}\}$, such that for
$i=1,2,...d+1$,
\begin{equation}
\label{eq-7} \\
S_{pi}=D_i\cdot [(S_1-\frac{1}{d+1}\cdot
One)^T,(S_2-\frac{1}{d+1}\cdot
One)^T,...,(S_{d+1}-\frac{1}{d+1}\cdot One)^T]^T
\end{equation}
$One$ is a $d$ length vector with each term equals $1$
\end{theorem}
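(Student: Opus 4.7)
The plan is to exploit the linearity of the map $R_x \mapsto R_{P(X)}$ together with a convenient centered version of the reconstruction formula (\ref{eq-b}).

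First, I would observe that the independence condition (\ref{eq-6}) between the entries of $T$ and the entries of $X$ gives
\[
R_{P(X)} \;=\; E[T X X^H T^H] \;=\; E_T[T \cdot R_x \cdot T^H],
\]
so the map $R_x \mapsto R_{P(X)}$ is an $\mathbb{R}$-linear map on Hermitian matrices. Next, I would rewrite (\ref{eq-b}) in the ``centered'' form
\[
R_x \;=\; \sum_{i=1}^{d+1} M_i \cdot Diag\!\bigl(S_i - \tfrac{1}{d+1} One\bigr) \cdot M_i^H,
\]
which is valid because each $M_i$ is unitary and hence $\sum_i M_i \cdot \tfrac{1}{d+1} I_d \cdot M_i^H = I_d$, exactly absorbing the $-I_d$ in (\ref{eq-b}). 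This is the same centering used inside the proof of Theorem $7$.

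Composing the two, $R_{P(X)}$ is a linear function of the concatenated vector $\mathbf{s} := [(S_1 - \tfrac{1}{d+1}One)^T,\ldots,(S_{d+1} - \tfrac{1}{d+1}One)^T]^T$ of length $d(d+1) = d^2+d$. Since $S_{pk} = Dv(M_k^H R_{P(X)} M_k)$ is itself an $\mathbb{R}$-linear functional of the Hermitian matrix $R_{P(X)}$, the composition gives $S_{pk} = D_k \mathbf{s}$ for some $d \times (d^2+d)$ matrix $D_k$. Expanding explicitly, the coefficient of the scalar $(S_i)_j - \tfrac{1}{d+1}$ in the $\ell$-th coordinate of $S_{pk}$ is
\[
(D_k)_{\ell,\,(i-1)d+j} \;=\; E\!\left[\,\bigl|\,e_\ell^{\,T} M_k^H \, T \, M_i \, e_j\,\bigr|^2\,\right],
\]
which is manifestly real (indeed nonnegative). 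Packaging these numbers into $D_k$ yields (\ref{eq-7}).

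There is no real obstacle here; the argument is a short bookkeeping consequence of the centered decomposition of $R_x$ plus the independence of $T$ and $X$. The only point that requires any care is verifying that the $D_k$ have \emph{real} entries, but this is forced both by the explicit $E[|\cdot|^2]$ formula above and, abstractly, by the fact that the input entries of $\mathbf{s}$ are real and the output diagonal of $M_k^H R_{P(X)} M_k$ is real (as the diagonal of a Hermitian matrix). No estimates or spectral computations are needed.
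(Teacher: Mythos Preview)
Your proof is correct and takes a more direct route than the paper's. The paper first imposes the eigenvalue bound $m(R_x) \geq \frac{1}{d+1}$ so that the preceding theorem applies, constructs an actual random vector $X' = \sum_i M_i Y_i$ built from independent scalars with $R_{x'}=R_x$, applies $P$ to $X'$, and uses Propositions I and II to expand $R_{P(X')}$ as a sum over the ``atoms'' $R_{P(M_i Z_j)}$; a second stage then removes the eigenvalue bound via a white-noise perturbation $X_n = \frac{1}{\sqrt{d+1}}(X+N)$. You bypass all of this by noting at the outset that independence of $T$ and $X$ gives $R_{P(X)} = E_T[T R_x T^H]$, which depends on $R_x$ linearly, and then substituting the centered reconstruction formula directly. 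Your explicit entry $(D_k)_{\ell,(i-1)d+j} = E\bigl[|e_\ell^{\,T} M_k^H T M_i e_j|^2\bigr]$ is exactly the $\ell$-th coordinate of the paper's $S_k^{i,j}$ (the $k$-spectrum of $R_{P(M_i Z_j)}$), so the two arguments produce the same matrices $D_k$; yours simply reaches them without the detour through the independent-variable construction and the white-noise trick. The paper's route has the merit of reinforcing the ``independent domains'' picture that drives the later applications, but logically your argument is shorter and needs no auxiliary random vectors or case split.
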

\begin{proof}
First assumes that $m(Rx)\geq \frac{1}{n+1}$, then from theorem $5$,
There exist $d+1$ random vectors $\{Y_1,Y_2,...Y_{d+1}\}$, satisfies
$\{(Y_i)_j, i=1,2,...d+1, j=1,2,...d\}$ are independent random
variables, $E(Y_i)=0$, and $E({(Y_i)_j}^2)=(S_i)_j-\frac{1}{d+1}$.
Let
\begin{equation}
X'=\sum_{i=1}^{d+1} M_i\cdot Y_i
\end{equation}
Then $Rx=Rx'$. From proposition I,II,
\begin{equation}
\label{eq-8}\\
R_{P(X)}=R_{P(X')}=\sum_{i=1}^{d+1}\sum_{j=1}^{d} R_{P(M_i\cdot
Z_{j})}\cdot ((S_i)_j-\frac{1}{d+1})
\end{equation}

$Z_{j}$ is the random vector satisfies $E(Z_{j})=0$, and
$E((Z_{j})_i^2)=\delta_{ij}$. (\ref{eq-8}) has already shown the
properties of $P$ can be determined by some deterministic matrixes,
but we need to go further.

 For each $R_{P(M_i\cdot Z_{j})}$, the $k$
spectrum is $S_k^{i,j}$, so the $k$ spectrum of $R_{P(X)}$ is:
\begin{equation}
\label{eq-9}
\\
S_{pk}=\sum_{i=1}^{d+1}\sum_{j=1}^{d} S_k^{i,j} \cdot
((S_i)_j-\frac{1}{d+1})
\end{equation}
So there exists $\{D_1,D_2,...D_{d+1}\}$ satisfies (\ref{eq-7}).

The second part of the proof will deal with the constraint
"$m(Rx)\geq \frac{1}{n+1}$". Let $X_n=\frac{1}{\sqrt{d+1}} \cdot
(X+N)$, $N$ is d length "white noise" with $E(N\cdot N^H)=I$. Now
$m(Rx_n)\geq \frac{1}{n+1}$ holds. Let $L_{ONE}$ denotes a $d^2+d$
length vector with each term is $1$. Let $S_{pk}^{(X_n)}$ denotes
the $k$-spectrum of $R_{P(X_n)}$, and $S_{pk}^{(N)}$ corresponds to
the $k$-spectrum of $R_{P(N)}$. Then
\begin{eqnarray}
S_{pk}^{(X_n)}&=&\frac{1}{d+1}\cdot D_k \cdot
[S_1^T,S_2^T,...,S_{d+1}^T]^T\\
S_{pk}^{(X_n)}&=&\frac{1}{d+1}\cdot (S_{pk}+S_{pk}^{(N)})\\
S_{pk}^{(N)}&=& \frac{1}{d+1}\cdot D_k \cdot L_{ONE}^T   \\
\end{eqnarray}
From above equations, (\ref{eq-7}) holds.
\hfill$\Box$
\end{proof}

$\{D_1,D_2,...D_{d+1}\}$ shows some basic property of P. For
example, Let $I_d$ denote the identity $d*d$ matrix, $1_d$ is a
$d*d$ matrix with each term equals $1$, if there are some real numbers $\mu_i,i=1,2,...,d+1$ such that:
\begin{equation}
D_k=[\mu_1\cdot 1_d, \mu_2\cdot 1_d,...\mu_{d+1}\cdot 1_d]
\end{equation}
Then the output of P will be $[k]$ stationary. If
\begin{equation}
D_k=[\mu_1\cdot 1_d, \mu_2\cdot 1_d,...,\mu_{i-1}\cdot
1_d,\mu_{i}\cdot I_d, \mu_{i+1}\cdot 1_d,...,\mu_{d+1}\cdot 1_d]
\end{equation}
Then P actually switch the $i$-spectrum of $X$ to the $k$-spectrum
of $P(X)$ with a global constant increment/decrement .

If a filter only want to do something about the $j$-spectrum and
keep the information of other spectra unchanged, then it should try
to satisfy :
\begin{equation}
D_k=[\mu_1\cdot 1_d, \mu_2\cdot 1_d,...,\mu_{k-1}\cdot
1_d,\mu_{k}\cdot I_d, \mu_{k+1}\cdot 1_d,...,\mu_{d+1}\cdot
1_d],k\neq j
\end{equation}

For example, let the matrix $T$ of operator $P$ be a deterministic
matrix with the form $[V,S_1(V),S_2(V),...,S_{d-1}(V)]^H$, while $V$
is a length $d$ vector and $S_i(V)$ means the vector which left ring shift $V$ $i$
times. This kind of $P$ is well studied(such as Winner Filter). We
could also say that kind of $P$ is a good $2$-spectrum filter if the
in put signals $X$ are $[1,3,4...,d+1]$ stationary, i.e
$E((X)_i\cdot (X)_j^H)=F(j-i)$, where $F$ satisfies $F(k)=F(-k)^H$.

So it's a very interesting question that what kinds of
$\{D_1,D_2...,D_{d+1}\}$ corresponds to a physical realizable random
operator, but this paper can't answer it .

 The second part of this subsection will focus on some special
kinds of operators.

\begin{theorem}
For $k=1,2,...d$, there exist a operator $ P_{i_1,i_2,...i_k}$, such that for
any input $X$, it will output $X_{i_1,i_2,...i_k}+N$. $X_{i_1,i_2,...i_k}$
is the $[i_1,i_2,...i_k]$ stabilizer of $X$ and $N$ is "white noise"
with $E(N\cdot N^H)=tr(R_X) \cdot \frac{d-k}{d}\cdot I_d$.
\end{theorem}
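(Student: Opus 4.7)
The plan is to exhibit $P_{i_1,\ldots,i_k}$ as an explicit random linear operator $P(X)=T\cdot X$ and to verify that $R_{P(X)}$ matches the autocorrelation of $X_{i_1,\ldots,i_k}+N$. Assuming the normalization $tr(R_X)=1$, I first determine this target. Applying Theorem 3 iteratively $k$ times, replacing $S_{i_j}$ by $F=\tfrac{1}{d}\cdot One$ one at a time inside the reconstruction formula (\ref{eq-b}), the stabilizer $X_{i_1,\ldots,i_k}$ has autocorrelation
\[
R_{X_{i_1,\ldots,i_k}}=\sum_{i\notin\{i_1,\ldots,i_k\}} M_i\cdot Diag(S_i)\cdot M_i^H-\tfrac{d-k}{d}\cdot I_d.
\]
Since $N$ is white noise independent of the stabilizer with $E(N\cdot N^H)=\tfrac{d-k}{d}\cdot I_d$, the autocorrelation I must reproduce is
\[
R_{X_{i_1,\ldots,i_k}+N}=\sum_{i\notin\{i_1,\ldots,i_k\}} M_i\cdot Diag(S_i)\cdot M_i^H.
\]

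Next I would build $T$ by a phase-randomization in each preserved domain. For each $i\notin\{i_1,\ldots,i_k\}$ let $D_i$ be a random diagonal $d\times d$ matrix whose diagonal entries $(D_i)_{j,j}$ are independent random variables uniformly distributed on the unit circle $\{z\in\mathbb{C}:|z|=1\}$; let all the $D_i$ be jointly independent and independent of $X$. Define
\[
T=\sum_{i\notin\{i_1,\ldots,i_k\}} M_i\cdot D_i\cdot M_i^H,\qquad P_{i_1,\ldots,i_k}(X)=T\cdot X.
\]
The joint-independence requirement of the random linear operator definition is automatic because $T$ depends only on the auxiliary phases.

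Finally I would compute $R_{P(X)}=E(T\cdot R_X\cdot T^H)$. Expanding gives $\sum_{i,j\notin\{i_1,\ldots,i_k\}} M_i\cdot E(D_i\, M_i^H R_X M_j\, D_j^H)\cdot M_j^H$, and entrywise
\[
\bigl(D_i\, M_i^H R_X M_j\, D_j^H\bigr)_{r,s}=(D_i)_{r,r}\,(M_i^H R_X M_j)_{r,s}\,\overline{(D_j)_{s,s}}.
\]
Each phase is zero-mean and unit-modulus and all phases are mutually independent, so this expectation vanishes unless $i=j$ and $r=s$. Every cross term drops, and each surviving diagonal term collapses to $M_i\cdot Diag(M_i^H R_X M_i)\cdot M_i^H=M_i\cdot Diag(S_i)\cdot M_i^H$, which is precisely the target. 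The only obstacle worth flagging is this phase-averaging bookkeeping, namely verifying both the killing of off-diagonal entries within each domain and the cancellation of all cross-domain contributions; once it is done, the theorem follows immediately from the MUB reconstruction identity together with iterated Theorem 3.
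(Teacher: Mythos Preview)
Your proof is correct and follows essentially the same construction as the paper: the paper also sets $T=\sum_{j\notin\{i_1,\ldots,i_k\}} M_j\cdot Ys^{(j)}\cdot M_j^H$ with $Ys^{(j)}$ a random diagonal matrix whose entries are independent, zero-mean, and unit second moment, then checks the resulting spectra. The only cosmetic differences are that you take the diagonal entries uniform on the unit circle rather than generic zero-mean unit-variance scalars, and you verify the claim by computing $R_{P(X)}$ directly rather than by reading off each $t$-spectrum; neither changes the argument.
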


\begin{proof}
Let $Ys^{(j)}=diag(y_1^{(j)},y_2^{(j)},...,y_{d}^{(j)})$, for each
$j \not\in \{i_1,i_2,...i_k\}$, satisfies that $Y=\{y_i^{(j)},j
\not\in \{i_1,i_2,...i_k\}, i=1,2...,d\}$ are independent random
variables, $E(y_i^{(j)})=0$ and $E((y_i^{(j)})^2)=1$. $Y$ are
independent to $X$. Let $X'$ is the output of $P_{i_1,i_2,...i_k}$
with input $X$ . $X'$ is constructed from the following equations.

\begin{equation}
X'=\sum_{j\not\in \{i_1,i_2,...i_k\}} M_j\cdot Ys^{(j)}\cdot
M_j^H\cdot X
\end{equation}
Then the $t$ spectrum $S_t'$ of $X'$ is:
\begin{eqnarray}
S_t'&=&S_t+tr(Rx)\cdot\frac{d-k}{d}\cdot One,t\not\in \{i_1,i_2,...i_k\}\\
S_t'&=&tr(Rx)\cdot\frac{d-k+1}{d}\cdot One,else
\end{eqnarray}
This finishes the proof.
 \hfill$\Box$
 \end{proof}

It's natural to see that more precise stabilization needs more
compensations on "white noise". It is also very interesting to study
how to lowerbound the "noise compensation" of "stabilization".

Based on above techniques, we can also construct a special kind of
filter mentioned above, the one that only works on designated
spectra. For example, think of the case that we only want to do
something in the Fourier domain. Based on above technique, we can
first choose a suitable value for $E((y_i^{(2)})^2)$, for
$i=1,2,...,d$. Then we output $M_2\cdot Ys^{(2)}\cdot M_2^H\cdot X
+X$. This filter only change the Fourier spectrum with compensation
on "white noise".
\begin{center}
\vskip 1em
 IV. MUB TRANSFORMATION FOR DETERMINISTIC VECTORS
 \vskip 1em
\end{center}
\vskip 1em From now on, X becomes a deterministic vector of $d$
dimension complex linear space, and $X$'s $k$-spectrum is denoted by
$S_k=M_k^H\cdot X$. When $k$ is a odd prime number, and $M_1$
denotes $I_d$, $M_k$ with $k>1$ can be constructed by the formulae
\cite{newproof}:
\begin{equation}
\label{eq-11}
 (M_k)_{j,r}=W^{r\cdot j+\frac{(k-2)\cdot(j^2-j)}{2}}
\end{equation}
$i$ is the square root of $-1$, and $W=e^{\frac{2\pi i}{d}}$. A trivial observation is
 that $M_2$ is the discrete fourier matrix. The
following theorem says that for each $k$, $k$-spectrum $S_k$ of $X$
can be found from X nearly as fast as the $2$-spectrum which could
use FFT.
\begin{theorem}
If for any $k=1,2,...d+1$, $M_k$ is constructed from (\ref{eq-11}),
$T_k$ denotes the time needed to compute $S_k$ from $X$, $T_k'$
denotes the time needed to compute $X$ from $S_k$, then $T_k\leq
T_2+d\cdot T_m $ and $T_k' \leq T_2'+d\cdot T_m$, where $T_m$ is the
time need complex multiplication.
\end{theorem}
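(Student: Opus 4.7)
The plan is to expose a factorization of $M_k$ into a diagonal matrix times the Fourier matrix $M_2$, after which the claim is immediate. Looking at the formula
\[
(M_k)_{j,r}=W^{r\cdot j+\frac{(k-2)(j^2-j)}{2}},
\]
the exponent splits as a sum of a term $rj$ (which depends on both indices and gives $M_2$) plus a term $\frac{(k-2)(j^2-j)}{2}$ (which depends only on the row index $j$). So I would first define the diagonal matrix $D_k$ by
\[
(D_k)_{j,j}=W^{\frac{(k-2)(j^2-j)}{2}}
\]
and verify by direct computation that $(D_k M_2)_{j,r}=(D_k)_{j,j}\,W^{rj}=(M_k)_{j,r}$, giving the factorization $M_k=D_k\cdot M_2$.

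With this in hand the forward direction is straightforward: $S_k=M_k^H\cdot X = M_2^H\cdot D_k^H\cdot X$. To compute the right-hand side I would first scale $X$ coordinatewise by the conjugated diagonal entries $\overline{(D_k)_{j,j}}$, which takes exactly $d$ complex multiplications (contributing $d\cdot T_m$), and then apply $M_2^H$ to the resulting vector, which is by definition the computation of the $2$-spectrum and costs $T_2$. This yields $T_k\leq T_2+d\cdot T_m$.

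The inverse direction is symmetric: given $S_k$, one has $X=M_k\cdot S_k = D_k\cdot M_2\cdot S_k$, so I would first apply $M_2$ (the operation whose cost is $T_2'$ by definition) and then scale the result by the diagonal entries of $D_k$, at cost $d\cdot T_m$, yielding $T_k'\leq T_2'+d\cdot T_m$.

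There is no real obstacle here; the only thing to be careful about is the bookkeeping of conjugation and the order of the two factors (left versus right multiplication, $M_k^H$ versus $M_k$). Once the diagonal factorization is written down, the theorem reduces to the fact that coordinatewise rescaling of a length-$d$ vector costs $d$ multiplications, and all of the nontrivial work is absorbed into the single Fourier transform that the constant $T_2$ (respectively $T_2'$) already accounts for.
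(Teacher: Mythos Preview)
Your proposal is correct and essentially identical to the paper's own proof: both factor $M_k$ as a diagonal matrix times $M_2$ and then read off the two time bounds from the cost of a coordinatewise scaling followed by (or preceded by) a single Fourier transform. Your diagonal exponent $\frac{(k-2)(j^2-j)}{2}$ is in fact the one that matches formula~(\ref{eq-11}); the paper writes $\frac{k(j^2-j)}{2}$, which appears to be a typo there rather than an error on your part.
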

\begin{proof}
Let $H_k=diag(h_1^{(k)},h_2^{(k)},...h_d^{(k)})$, where
$h_j^{(k)}=W^{\frac{k\cdot(j^2-j)}{2}}$. Then :
\begin{eqnarray}
M_k&=&H_k\cdot M_2\\
X&=&M_k \cdot S_k=H_k\cdot M_2 \cdot S_k \\
M_k^H&=&M_2^H\cdot H_k^H\\
S_k&=&M_k^H\cdot X=M_2^H\cdot H_k^H \cdot X
\end{eqnarray}
This finishes the proof.
 \hfill$\Box$
 \end{proof}

Similar to DFT, there are also some interesting properties for the
MUB spectra of $X$.
\begin{theorem}
For a normalized complex vector $X$,  let $m_i=|S_j|_{\oe}$ ,then
the following holds:
\begin{equation}
 m_j<\frac{1}{\sqrt d}\cdot m_i+\sqrt {1-m_i^2},j\neq i
 \end{equation}
\end{theorem}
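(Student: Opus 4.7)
The plan is to start from the coordinate where the $\infty$-norm of $S_i$ is attained and express $X$ as a sum of a rank-one approximation along the corresponding MUB basis vector plus an orthogonal residual. Specifically, I would pick an index $p$ so that $|(S_i)_p|=m_i$, and write
\begin{equation}
X = m_i\,e^{\mathbf{i}\theta}\,(M_i)_p + Y,
\end{equation}
where $(M_i)_p$ denotes the $p$-th column of $M_i$, $e^{\mathbf{i}\theta}$ is the phase of $(S_i)_p$, and $Y$ is orthogonal to $(M_i)_p$. Since $\|X\|=1$, this decomposition forces $\|Y\|^2 = 1-m_i^2$.

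Next I would apply $M_j^H$ to both sides and estimate the two resulting terms separately. For the first term, the MUB property guarantees that every entry of $M_j^H(M_i)_p$ has modulus exactly $1/\sqrt{d}$, so each coordinate of $m_i e^{\mathbf{i}\theta} M_j^H(M_i)_p$ has modulus $m_i/\sqrt{d}$. For the second term, unitarity of $M_j$ gives $\|M_j^H Y\|_2 = \|Y\|_2 = \sqrt{1-m_i^2}$, and since the $\infty$-norm of any vector is bounded by its $2$-norm, each entry of $M_j^H Y$ has modulus at most $\sqrt{1-m_i^2}$. Combining these via the triangle inequality coordinate-by-coordinate and taking the maximum yields
\begin{equation}
m_j \;\le\; \frac{m_i}{\sqrt{d}} + \sqrt{1-m_i^2},
\end{equation}
which is the desired bound (the strict form follows in generic position, since equality in $\|\cdot\|_\infty \le \|\cdot\|_2$ forces $M_j^H Y$ to be supported on one coordinate while the triangle inequality also demands the two contributions to share a phase at that coordinate).

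The only subtle step is the first one: extracting the MUB property $|(M_j^H M_i)_{q,p}|=1/\sqrt{d}$ at the right place. Everything else is a clean triangle/Cauchy--Schwarz argument, dual in flavor to the earlier ``uncertainty principle'' (Theorem~4) for correlation matrices but now for deterministic vectors rather than second moments. I do not expect real obstacles here; the only thing worth checking carefully is the boundary case $m_i=1$, where $Y=0$ and the bound is attained with equality, which suggests the statement should really be written with ``$\le$'' in place of ``$<$''.
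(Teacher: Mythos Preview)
Your argument is correct and very close in spirit to the paper's. Both proofs single out the coordinate $p$ where $|(S_i)_p|=m_i$, use the MUB relation $|(M_j^H M_i)_{q,p}|=1/\sqrt d$ to control the contribution of that coordinate, and bound the residual (your $Y$, the paper's $\sum_{k\neq p}(S_i)_k(M_i)_k$) via a Cauchy--Schwarz/$\ell^2$ estimate using $\sum_{k\neq p}|(S_i)_k|^2=1-m_i^2$.

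The one substantive difference is in how the residual is estimated. You bound $|(M_j^H Y)_q|$ by $\|M_j^H Y\|_2=\sqrt{1-m_i^2}$, discarding the MUB structure on the remaining columns; this yields only $m_j\le m_i/\sqrt d+\sqrt{1-m_i^2}$, and you rightly note that the strict ``$<$'' is not automatic. The paper instead writes $(S_j)_q=\sum_k (M_j^H M_i)_{q,k}(S_i)_k$ and bounds the residual part by
\[
\frac{1}{\sqrt d}\sum_{k\neq p}|(S_i)_k|\;\le\;\frac{1}{\sqrt d}\cdot\sqrt{d-1}\cdot\sqrt{1-m_i^2},
\]
using the MUB modulus on \emph{every} column and then Cauchy--Schwarz. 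This retains the factor $\sqrt{(d-1)/d}<1$, so the final step $\frac{\sqrt{d-1}}{\sqrt d}\sqrt{1-m_i^2}<\sqrt{1-m_i^2}$ gives the strict inequality directly whenever $m_i<1$. (At $m_i=1$ both arguments collapse to $m_j=1/\sqrt d$, so your remark that ``$\le$'' is the honest statement there is well taken.)
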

\begin{proof}
With out loss of generality, assumes $|(S_i)_1|=m_i$.If $j\neq i$,
we have:
\begin{eqnarray}
m_j&\leq& \frac {1}{\sqrt d}\cdot \sum_{k=1}^{d}|(S_i)_k|\\
&\leq &\frac{1}{\sqrt d}\cdot m_i+\frac {1}{\sqrt d}\cdot
\sum_{k=2}^{d}|(S_i)_k|\\
\label{eq-13} &\leq& \frac{1}{\sqrt d}\cdot m_i+\frac{\sqrt
{d-1}}{\sqrt d}\cdot \sqrt{1-m_i^2}\\
&<& \frac{1}{\sqrt d}\cdot m_i+\sqrt {1-m_i^2}
\end{eqnarray}

It's easy to see that(\ref {eq-13}) comes from cauchy-schwarz
inequality. \hfill$\Box$ \\ \end{proof}

The above theorem can be thought of the generalization of original
"uncertainty principle" for deterministic vectors. While the next
theorem is a positive result about the MUB spectra.

\begin{theorem}
For any normalized complex vector $X$ , there exists $k\in [1,d+1]$
,satisfies $|S_k|_{\oe}>\frac {1}{\sqrt d}$
\end{theorem}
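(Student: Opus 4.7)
The plan is to argue by contradiction. Suppose that $|S_k|_{\oe} \leq \frac{1}{\sqrt d}$ for every $k \in \{1, 2, \ldots, d+1\}$. Since each $M_k$ is unitary, $|S_k|_2^2 = |M_k^H X|_2^2 = |X|^2 = 1$, so $\sum_{j=1}^d |(S_k)_j|^2 = 1$. The contradiction assumption forces each of the $d$ summands to be at most $\frac{1}{d}$, and since they sum to $1$, every one of them must equal $\frac{1}{d}$ exactly. In other words, $|(S_k)_j|^2 = \frac{1}{d}$ for all $k$ and $j$.

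Next I would lift the statement about the vector $X$ to a statement about the rank-one positive matrix $\rho = X\cdot X^H$. This $\rho$ is a valid autocorrelation matrix with $tr(\rho)=1$, and its $k$-spectrum in the sense of Definition $1$ is the diagonal of $M_k^H \rho M_k = (M_k^H X)(M_k^H X)^H = S_k S_k^H$, whose $j$-th entry is exactly $|(S_k)_j|^2$. Under the contradiction assumption, the autocorrelation spectrum of $\rho$ is therefore $\frac{1}{d}\cdot One$ in every basis.

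Now I would plug this into the reconstruction formula from Theorem $1$ (equation (\ref{eq-b})):
\begin{equation*}
\rho \;=\; \sum_{k=1}^{d+1} M_k\cdot Diag\!\left(\tfrac{1}{d}\cdot One\right)\cdot M_k^H - I_d \;=\; \frac{1}{d}\sum_{k=1}^{d+1} M_k M_k^H - I_d \;=\; \frac{d+1}{d}\,I_d - I_d \;=\; \frac{1}{d}\,I_d.
\end{equation*}
But $\rho = X X^H$ has rank $1$ with unique nonzero eigenvalue $|X|^2=1$, and in particular is not a multiple of the identity whenever $d\geq 2$. This contradicts the conclusion just derived, so the initial assumption fails and some $k$ must satisfy $|S_k|_{\oe} > \frac{1}{\sqrt d}$.

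The argument is essentially a pigeonhole at the level of squared magnitudes combined with the reconstruction identity, so there is no serious analytic obstacle; the only subtlety is recognising that $X X^H$ qualifies as an autocorrelation matrix and that its Definition $1$ spectra are the coordinate-wise squared moduli of the $S_k$ defined in Section $IV$, so Theorem $1$ can be applied even though the current section is about deterministic vectors rather than random ones.
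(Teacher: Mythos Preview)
Your argument is correct. Both your proof and the paper's begin identically: the assumption $|S_k|_{\oe}\le 1/\sqrt d$ for every $k$, together with $\sum_j |(S_k)_j|^2=1$, forces $|(S_k)_j|^2=1/d$ for all $k,j$. From this common starting point the two proofs diverge.

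The paper forms $B=V_x V_x^{H}$ (up to a scalar) and argues in the Hilbert--Schmidt inner product: under the hypothesis, $B$ is orthogonal to every non-diagonal matrix in the orthogonal basis of $d\times d$ matrices built from MUB projectors in \cite{newproof}; since $B$ itself is not diagonal (all entries of $S_1=X$ have modulus $1/\sqrt d$, hence all off-diagonal entries of $XX^{H}$ are nonzero), adjoining $B$ to that basis would give $d^2+1$ mutually orthogonal elements in a $d^2$-dimensional space. Your route instead feeds the flat spectra into the reconstruction identity~(\ref{eq-b}) of Theorem~1 and reads off $XX^{H}=\tfrac{1}{d}I_d$, which contradicts $\mathrm{rank}(XX^{H})=1$ directly. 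The two arguments are close cousins---the reconstruction formula is precisely the statement that the MUB projectors span Hermitian matrices---but yours is more self-contained: it cites only Theorem~1 of the present paper rather than an external construction, and it yields the contradiction in one line without the auxiliary observation that $B$ is non-diagonal. The paper's version, on the other hand, makes the underlying linear-algebra mechanism (an overcomplete orthogonal system) more visible.
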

\begin{proof}
Let $V_x$ is the vector which contradict the theorem, construct a
$d*d$ matrix $A=[Vx,Vx,...Vx]$, and $B=A\cdot A^H$. All the $d*d$
matrix forms a $d^2$ linear space, and $B$ is not in the subspace of all
diagonal matrixes. It's easy to check $B$ is orthogonal to all the
non-diagonal matrixes constructed in theorem $3.4$ of
\cite{newproof}, which implies there are at least $d^2+1$ orthogonal
bases for $d*d$ matrixes. \hfill$\Box$ \\ \end{proof}

If $d$ is prime and MUB are constructed from (\ref{eq-11}),
numerical analysis suggests that complete MUB spectra of $X$ also
have many interesting properties similar to the spectrum of DFT,
such as symmetry of $X$ will leads interesting symmetries for all
MUB spectra , and ring shifts of $X$ also cause some shifts of all
MUB spectra in the sense of absolute values.

\begin{center}
\vskip 1em
 V. ENCODE INFORMATION INTO SECOND MOMENTS
 \vskip 1em
\end{center}
\vskip 1em
 The main application of above results is a simple digital
communication protocol which can significantly increase the number
of users who can use the channel simultaneously and worst case
bounded. Although the theoretical protocol is far from practice, it
has provided some deep insights about how to encode information into
the second moments between random variables. Based on the results of
Section $III$, we will introduce some interesting alternations of
the model which suggest we can do many things based on such model.

First we assume that $\{A_1,A_2,...,A_n\}$ are all nodes who want to
communicate with others. There is only a public discrete complex
channel $C$ for them to communicate. In the first half of each time
interval, $C$ collect a complex message $Mes_i$ from $A_i$,  sum
$Mes_i$ all to $Mes$, and send $Mes$ to each $A_i$ in the second
half of the interval.

We assume for every $d$ intervals, $C$ will give an synchronous
impulse to each $A_i$ which can be distinguished from messages. The
abilities of $A_i$ are constraint, they can't count the impulses
from the start. Actually, the impulses can be thought as the frame
synchronous signal of the channel, and this model is the base for
multiple access digital communication\cite{DigitalCom}, such as
TDMA/FDMA\cite{TDMA,tdma2}. Since the number of digital
communication users grows fast, scientists invent many advance
techniques to handle large size systems, such as the one which
combine TDMA and FDMA together \cite{Trans}. In this part, we
present a easier way to increase the number of users. We will also
study some interesting alternations of $C$ later.

We define the protocol is $(n,d,m)$ worst case good on $g$ if there
exists a function $g$ , such that from the start time when $A_i$
wants to send a $k$ bit message , there only needs $g(n,d,k)$ time
intervals to make sure that the probability that $A_j$ can get right
information from $A_i$ is at least 2/3, for each $j\neq i$.

It's easy to see that when $n=O(d)$, the protocol is good because of
TDMA or FDMA. If we have more users, we can use the idea of
arithmetic coding\cite{Arthcode}, but it's hard to be applied to
large system because $t$ times users needs $2^t$ times power cost
for some users.  Now we introduce a easily applied protocol which
can square the number of users. Actually, it's a protocol which is
worst case good on function $g=O(d^5\cdot lg(k))\cdot k$, and
requests at most $d^3$ times of power cost. Numerical analysis
suggests that when $d=127$ and $n=254$, then within $100*127$ time
intervals, $A_j$ can get the right information from $A_i$ with high
probability.

The idea of the protocol is very simple. We first assume the
messages are all positive real numbers . Then we assign each $A_i$ a
special range, such as time range or frequency range. When $A_i$
want to send some messages, he first flips coins and gives some
random signs to the messages. After that, $A_i$ sends the message
which are coded in his designated range. The key is that if $X$ is a
composition of random vectors $\{V_k,k\in [1,d+1]\}$ from different
domains(see definition $6$ ), then the energy distribution of $X$ on
domain $k$ is the same as $V_k$ except for a global constant shift.
For example, we will give the protocol when $d=4$, $n=10$.

Assume for $i=1,2,..,10$, the the messages of $A_i$ are two real
numbers $Mo_1^{(i)}$ and $Mo_2^{(i)}$ of $[0,1]$, and he wants to
tell others which one is lager. Let $M_1$, $M_2$,..,$M_5$ are the
MUB of dimension $4$. For $A_i$, we assign $M_{[\frac {i+1}{2}]}$ to
him, where $[x]$ means the integer part of $x$. To communicate,
$A_i$ first compute $Mes^{(i)}=\sqrt{Mo^{(i)}}$. Then for each
round, $A_i$ flips two coins, and change the sign of $Mes_j^{(i)}$
if he got "heads" at the $j'th$ flipping, $j=1,2$. Then he computes
$Vi$ by
\begin{eqnarray}
V_i&=&M_{[\frac {i+1}{2}]}\cdot[Mes_1^{(i)},Mes_2^{(i)},0,0]^T,i=odd\\
V_i&=&M_{[\frac {i+1}{2}]}\cdot[0,0,Mes_1^{(i)},Mes_2^{(i)}]^T,i=even\\
\end{eqnarray}

When each synchronous impulse comes, $A_i$ send $V_i$ one by one to
the channel $C$. For $A_j$, he receives signals one by one from $C$.
Assume the signals in this round form a $d$ length complex vector
$X$. For $A_j$, he needs to keep $|(M_1^H\cdot X)_1|^2$ and
$|(M_1^H\cdot X)_2|^2$ for the information of $A_1$, and keep the
data for other $A_i$ in a similar way, $i\neq j$. Then after a
$1000$ rounds, $A_j$ can tell whether $Mo_1^{i}$ is larger than
$Mo_2^{i}$ correctly with high probability.

In general case, we count the rounds needed for $A_j$ theoretically.
We assume $A_j$ wants to recover $E(|(M_1'\cdot X)_1|^2)$. In the
communication , some users of domain $k$, $k\neq 1$, may start/stop
to sent signals to $C$. It doesn't matter, because for $A_j$, they
are global looked same noise and won't effect the relation between
$E(|(M_1'\cdot X)_1|^2)$ and $E(|(M_1'\cdot X)_2|^2)$. So we only
consider the case when the total energy of all domains except $1$ is
upper bounded by $K$.

We know that $X$ is constructed from independent random variables
$\{n_{j}^{(i)}\}$ from different domains, where $n_{j}^{(i)}$
denotes the $j'th$ random variable from domain $i$. Because
$n_{j}^{(1)}$ won't effect $|(M_1'\cdot X)_1|^2$ for $j=2,3,...,d$,
we compute the standard deviation $\sigma (|(M_1'\cdot X)_1|^2)$ by:
\begin{eqnarray}
\sigma ^2(|(M_1'\cdot X)_1|^2)&\leq& O( E(|(M_1'\cdot X)_1|^4))\\
&\leq& O (\sum_{j_1,j_2,i_1\neq 1,i_2\neq 1}\frac {1}{d^2}\cdot
E((n_{j_1}^{(i_1)})^2)\cdot E((n_{j_2}^{(i_2)})^2))\\
&\leq& O(\frac {K^2}{d^2})
\end{eqnarray}

$K$ denotes the total energy from all the domains except $1$. Let
$M(|(M_1'\cdot X)_1|^2)$ denotes the mean value of $|(M_1'\cdot
X)_1|^2$ in $r$ rounds. Using chernoff bound, we conclude when
$r=O(K^2\cdot lg(k))$, we have:
\begin{equation}
Pro(M(|(M_1'\cdot X)_1|^2)-E(|(M_1'\cdot
X)_1|^2)>O(\frac{1}{d}))\leq O(\frac{1}{k})
\end{equation}

So the probability that all $k$ bits are correct is more than a
constant positive value. In the worst case, when $K=d^2$, we need
$O(d^5\cdot lg(k)\cdot k)$ time intervals to make sure $A_j$ can
receive the right information from $A_i$ with probability larger
than $\frac{2}{3}$.

Next we consider the error from quantification. It's easy to check
that when the error of $(X)_i$ is less than $\epsilon$, for
$i=1,2,..,d$, then error of $|(M_1'\cdot X)_1|^2$ is less than
$O(d^2\cdot\epsilon) $. So if $\epsilon<\frac{1}{d^3}$, the mean
error of $|(M_1'\cdot X)_1|^2$ from quantification will be less than
$O(\frac{1}{d})$. For each $A_i$, he need to quantify the the
signal(sent or received) to $O(d^3)$ discrete magnitude values and
$O(d^3)$ discrete phase values to satisfy $\epsilon<\frac{1}{d^3}$.

If only time/frequence resources are allowed, the protocol is just
"TDMA/FDMA". When the case that more than one domains from MUB are
used, we must bounds the total energy of each domain because it's
the "noise" of other domains. There is a trade off in this model,
when more users work simultaneously, more noise comes, so more
rounds are needed. But the rounds needed for $A_i$ will be upper
bounded by a function which only concerns $n,d,k$. Although each
user can choose any time to start or end a communication process, a
better choice is to choose a time when the energy of his designated
range is low, which may bring a average optimization to the whole
system. So when the frequency resource is in shortage, and it's not
suitable to apply some advanced techniques to the system, it seems a
reasonable way to allocate resources to great numerous of users, for
the reason that it's adaptable, analyzable, and worst case bounded.

Actually, traditional protocols such as "TDMA" are based on the
first moments of the signal, while the highlight of our protocol is
that it can fully utilize the information of the second moments of
the signals.

Next we'll focus on some special kinds of channels/filters  $C$
based on subsection $C$ of Section $III$. We study how can $C$
process the information of each $A_i$.

First, when $C$ has "white noise" $N$, then $N$ effects all the
users equivalently as "white noise".

Second, if $C$ can be described by some deterministic matrixes
$\{D_1,D_2,...,D_{d+1}\}$ (See theorem $7$), $C$ will do what we
claimed in the part following theorem $7$. So we can choose the
domains that have nice properties to realize the protocol.

Third, follow the idea of theorem $8$, $C$ can do something special
to $A_i$. Such as $C$ can change the information of $A_i$ without
effect others except for some global looked same "noise". Actually,
$C$ can stabilizes the range designated to $A_i$ so nobody can know
the information from $A_i$.

Compared to traditional protocol, such as "TDMA", $C$ can almost do
all the job the channel $C_T$ of "TDMA" can do. Even more, $C$ also
can do things $C_T$ can't do, such as $C$ can switch the information
from different domains. However, almost every special thing $C$ can
do will bring "noise". So the question raised before that "what
kinds of $\{D_1,D_2,...,D_{d+1}\}$ correspond to a physical
realizable filter" becomes important.
\begin{center}
\vskip 1em
 VI. DISCRETE SIGNAL ANALYSIS WITH MUB
 \vskip 1em
\end{center}
\vskip 1em
 In subsection $B$ of Section $III$, the traditional definition of
"Stationary" is extremely extended by MUB. And subsection $C$ of
Section $III$ suggests the spectra which are far from stationary
must implies some nontrivial information in their domains. Actually,
if we treat discrete signals as a composition of independent signals
from different domains, then spectra analysis in any domain has its
own meaning: the $k$-spectrum uniquely describe the energy
distribution of the $k$-domain random vector except for a global
constant shift. So Fourier spectrum analysis also makes sense when
the signal is nonstationary.

Subsection $C$ of Section $III$ gives some ideas about how to
construct filters to process statistic signals. These filters are
different from traditional ones in the sense that they must concerns
all the spectra which we are interested in.

Next, for signal detection,  we give a definition regarded to how to
judge whether a signal is meaningful.
\begin{definition}
The $k$-spectrum entropy of $X$ is defined
$E_k(X)=\sum_{i=1}^{d}(-lg(\frac {(S_k)_i}{tr(Rx)}))$, the complete
entropy of $X$ is defined $E_c(X)=\sum_{j=1}^{d+1}E_j(X)$.
\end{definition}

So meaningful signals should has $Ec$ less than $d\cdot(d+1)\cdot
lg(d)$. And a signal with $E_2$ much less than $d\cdot lg(d)$ must
implies some important information in the Fourier domain, no matter
whether the signal is stationary or not.

However, the most important thing left in this part is how to
justify the physical meanings of each base. This paper failed to
achieve it. Unlike the Fourier base, for other bases from MUB, it's
looks impossible to correspond them to continuous functional
transformations when we only use the the construction when $d$ is
prime. Roughly speaking, the MUB spectra based on the constructions
when $d$ is prime is very sensitive to $d$. For instance, when a
vector has only a single point in the $k>2$-spectrum for dimension
$d$, then it will change a lot when consider the $d'>d$ dimension's
$k$-spectrum, and the larger $k$ , the more change. Whatever, the
paper suggests that if the physical meaning of a base (such as the
Fourier base) has been found, then do spectra analysis of such base
will always make sense.

To achieve to goal, we need the efforts from various areas. Such as
we need scientists from the areas of signal processing, physics, and
bioinformatics to find some physical meanings of spectra which are
definitely different from frequency. And we also need mathematicians
to tell us how to construct MUB which have as many good properties
as possible (such as the Fourier bases).
\begin{center}
\vskip 1em
 VII. DIMENSIONALITY REDUCTION WITH MUB
 \vskip 1em
\end{center}
\vskip 1em
 For information lossy data compression such as
dimensionality reduction, sometimes it's hard to have a good
compression ratio when few prior knowledge is known, and things
become even worse when the data looks like "white
noise"\cite{PCA,PCA2}.  In this section, we claim that Mutually
Unbiased Based can do the looks impossible job in some sense.

In the following, compress $X$ with MUB means choosing a subset
$Sub_m$ of all MUB bases, and find a optimal MUB spectrum of $Sub_m$
to express $X$, which need only $lg(d)$ bits to denote which base
has been chosen.

 Theorem $8$ is a technical reason that engineers
can choose any unbiased base to do data transformation, theorem $9$
suggest that not all spectra can look good, and theorem $10$ makes
sure that the worst case won't happen when whole MUB spectra are
considered. Next, we will do something different.

 $Sp$ denotes the unit sphere of $d$ dimension complex
linear space, i.e $Sp=\{V|<V,V>|_2 =1, V\in C^d\}$. For any subset
$Sub_{Sp}$ of $Sp$, $V(Sub_{Sp})$ denotes its standard volume metric
of $d$ dimension complex sphere \cite{Geometry}.

A normalized uniform random vector is a good start point to analysis
the case when no prior knowledge is known.
\begin{definition}
$X$ is a normalized uniform random vector if and only if :
\begin{equation}
Pr(X\in Sub_{Sp})=\frac {V(Sub_{Sp})}{V(Sp)}
\end{equation}
\end{definition}

In the following, compressing $X$ with $k$ normalized unitary
matrixes $\{B_1,B_2,...,B_k\}$ means choosing a optimal spectrum of
these bases to express $X$, which needs only $lg(k)$ bits to denote
which base has been chosen. First we assume $k\leq d+1$ bases from
MUB are chosen, and the max absolute value of $X$'s $i$-spectrum is
$m_i$. Then we arbitrarily choose $k$ unitary normalized matrixes
$U_1,U_2,...,U_k$ , and let $u_i=|U_i\cdot X|\oe$. We often wants to
find some spectrum with large entry to express $X$. The following
theorem justifies that the bases from MUB will do better than any
$\{U_i\}$ locally .
\begin{theorem}
When $X$ is a normalized uniform random vector, then:
\begin{equation}
Pr(max(m_1,m_2,...m_k)\geq \sqrt{\frac {d}{2d+1-2\sqrt{d}}})\geq
Pr(max(u_1,u_2,...u_k)\geq \sqrt{\frac {d}{2d+1-2\sqrt{d}}})
\end{equation}
\end{theorem}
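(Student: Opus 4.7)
The plan is to exploit two facts: (a) the uniform distribution on $Sp$ is unitarily invariant, so every marginal $Pr(m_i\ge t)$ and every $Pr(u_i\ge t)$ coincides with $Pr(|X|_{\oe}\ge t)$; and (b) the uncertainty principle of Theorem~9 forces the events $\{m_i\ge t\}$ to be pairwise disjoint for exactly the threshold $t=\sqrt{d/(2d+1-2\sqrt{d})}$. Once both are in place, the left-hand probability collapses to an exact sum and the right-hand one is at most that same sum by a union bound.

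First I would verify that the stated $t$ is the critical value. A direct algebraic manipulation shows
\begin{equation}
t=\sqrt{\tfrac{d}{2d+1-2\sqrt d}}\quad\Longleftrightarrow\quad \tfrac{t}{\sqrt d}+\sqrt{1-t^{2}}=t,
\end{equation}
i.e.\ $t$ is exactly the fixed point of the Theorem~9 bound. Next, I would note that the map $\varphi(m)=m/\sqrt d+\sqrt{1-m^{2}}$ is nonincreasing on $[1/\sqrt{d+1},\,1]$, and that the chosen $t$ lies in this interval for every $d\ge 2$. Consequently, if $m_i\ge t$ then for $j\ne i$ Theorem~9 gives $m_j<\varphi(m_i)\le\varphi(t)=t$. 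Thus the events $A_i=\{m_i\ge t\}$, $i=1,\dots,k$, are pairwise disjoint.

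Next I would apply unitary invariance. Since each $M_i^{H}$ and each $U_i$ is unitary and $X$ is uniform on $Sp$, the vectors $M_i^{H}X$ and $U_iX$ are each uniformly distributed on $Sp$; in particular $Pr(m_i\ge t)=Pr(u_i\ge t)=Pr(|X|_{\oe}\ge t)$ for every $i$. Combining the disjointness from the previous step with this marginal identity,
\begin{equation}
Pr\bigl(\max_i m_i\ge t\bigr)=\sum_{i=1}^{k}Pr(m_i\ge t)=k\cdot Pr(|X|_{\oe}\ge t)\ge Pr\bigl(\max_i u_i\ge t\bigr),
\end{equation}
where the final step is the union bound applied to the (generally overlapping) events $\{u_i\ge t\}$.

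The only non-routine step is the disjointness argument: one must identify the right threshold as the fixed point of $\varphi$, and then invoke the strict inequality in Theorem~9 together with the monotonicity of $\varphi$ on $[t,1]$. Everything else reduces to unitary invariance of the uniform measure and the union bound, so I expect no further obstacles beyond these elementary calculus/algebra checks.
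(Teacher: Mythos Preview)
Your argument is correct and follows essentially the same strategy as the paper: show that at the threshold $t=\sqrt{d/(2d+1-2\sqrt d)}$ the MUB events are pairwise disjoint, use unitary invariance to equate all single-basis probabilities, and finish with a union bound on the $U_i$ side. The only cosmetic difference is granularity: the paper proves a separate lemma to obtain disjointness of the spherical caps around each of the $kd$ individual MUB vectors, whereas you work at the basis level and obtain disjointness of the $k$ events $\{m_i\ge t\}$ directly from Theorem~9 via your fixed-point/monotonicity observation.
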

\begin{proof}
First, a lemma will be shown :
\begin{lemma}
If $V_1$, $V_2$ are two normalized $d$ length complex vectors
satisfies $|<V_1,V_2>|\leq \frac {1}{\sqrt d}$. Then for any
normalized vector $V$, if $|<V,V_1>|=|<V,V_2>|=C$, we have :
\begin{equation}
C\leq \sqrt{\frac {d}{2d+1-2\sqrt{d}}}
\end{equation}
\end{lemma}
\begin{proof}
There exist some vector normalized $W$, $|<W,V_1>|=0$, and
$V=e^{i\theta_1}\cdot C\cdot V_1 +\sqrt{1-C^2}W$, $i$ is the square
root of $-1$.Then we have:
\begin{eqnarray}
C=|<V,V_2>|&=&|\frac{1}{\sqrt{d}}\cdot C\cdot
e^{i\theta_1}+\sqrt{1-C^2}\cdot <W,V_2>|\\
&\leq& C\cdot \frac{1}{\sqrt d}+\sqrt{1-C^2}
\end{eqnarray}
From above inequality, we can prove the lemma. \hfill$\Box$ \\
\end{proof}

For any vector $V_0$ and constant $C$, let :
\begin{equation}
De(V_0,C) =\{V:|V|_2=1, | <V,V_0>|>C, V \in C^d\}
\end{equation}
If $C=\sqrt{\frac {d}{2d+1-2\sqrt{d}}}$, and $V_i$,$V_j$ are any two
unequal vectors from MUB, then:
\begin{equation}
De(V_i,C) \cap De(V_j,C)=\emptyset
\end{equation}
So
\begin{eqnarray}
Pr(max(m_1,m_2,...m_k)\geq C)&\geq& k\cdot
d\cdot \frac {V(De(V_0,C))}{V(Sp)} \\
&\geq& Pr(max(u_1,u_2,...u_k)
\end{eqnarray}
\hfill$\Box$ \\ \end{proof}

 {\em Remark I} When $d$ goes to
infinity, $\sqrt{\frac {d}{2d+1-2\sqrt{d}}}$ limits to $\sqrt{2}$.

{\em Remark II} When $d$ goes to infinity, $d\cdot (d+1)\cdot \frac
{V(De(V_0,C))}{V(Sp)}$ goes to zero when $C>0$.

Since Remark II is a negative news for large size data. In this
case, we can cut the total vector into shorter ones, with the
compensation on more bits to denote which bases have been used. The
next conjecture try to support MUB globally, where $m_i$,$u_i$ has
the same meaning.
\newtheorem{conjecture}{Conjecture}
\begin{conjecture}
When $X$ is a normalized uniform random vector, then:
\begin{equation}
E(max(m_1,m_2,...m_k))\geq E(max(u_1,u_2,...u_k))
\end{equation}
\end{conjecture}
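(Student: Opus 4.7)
The plan is to prove the stronger pointwise stochastic dominance $P(\max_i m_i \geq t) \geq P(\max_i u_i \geq t)$ for every $t \in [0,1]$; the conjecture then follows by integrating, since $E(\max_i m_i) = \int_0^1 P(\max_i m_i \geq t)\,dt$ and similarly for $u_i$.

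The first step is to recast this as a purely geometric statement. By rotational invariance of the uniform measure on $Sp$, for any fixed unit vector $v$ the quantity $V(De(v,t))/V(Sp)$ depends only on $t$; call it $\alpha(t)$. Writing $v_{ij}^{M}$ for the $j$th column of $M_i$ and $v_{ij}^{U}$ for the $j$th column of $U_i$, one has
\begin{equation*}
P(\max_i m_i \geq t) = \frac{V\bigl(\bigcup_{i,j} De(v_{ij}^{M},t)\bigr)}{V(Sp)}, \qquad P(\max_i u_i \geq t) = \frac{V\bigl(\bigcup_{i,j} De(v_{ij}^{U},t)\bigr)}{V(Sp)}.
\end{equation*}
So the conjecture reduces to the claim that among all configurations of $k$ orthonormal bases in $\mathbb{C}^d$, MUB maximize the volume of the union of the $kd$ associated caps.

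For thresholds $t \geq \sqrt{d/(2d+1-2\sqrt d)}$ the disjointness lemma inside the proof of the preceding theorem already shows the MUB caps are pairwise disjoint, so their union has the maximum possible volume $kd\cdot\alpha(t)$ (the union bound is tight), and the pointwise inequality is immediate in this regime. For smaller $t$ the MUB caps overlap, and I would expand by inclusion-exclusion,
\begin{equation*}
V\Bigl(\bigcup A_{ij}\Bigr) = \sum V(A_{ij}) - \sum_{(i,j)<(i',j')} V(A_{ij} \cap A_{i'j'}) + \cdots,
\end{equation*}
noting that the single-cap sum is $kd\cdot\alpha(t)$ regardless of the bases. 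A key sublemma I would establish is that $V(De(v_1,t)\cap De(v_2,t))$ depends only on $|\langle v_1,v_2\rangle|$ (by rotational symmetry) and is a monotone non-decreasing function of that magnitude. Since MUB vectors have pairwise inner-product magnitudes equal to $0$ (within a basis) or $1/\sqrt d$ (across bases), which are extremal among $k$-tuples of orthonormal bases, MUB should minimize the pairwise-intersection sum.

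The main obstacle is that inclusion-exclusion alternates in sign and involves higher-order intersections (triples, quadruples, \ldots) whose volumes depend on more than just pairwise inner products. Here I would exploit the fact that MUB form a complex projective 2-design, which yields exact identities for averages of low-degree symmetric polynomials in $|\langle X,v_{ij}\rangle|^2$ over the configuration. The plan is to approximate the indicator of $\{\max_{i,j}|\langle X,v_{ij}\rangle|^2 \geq t^2\}$ by polynomials of controlled degree in the variables $|\langle X,v_{ij}\rangle|^2$, evaluate the MUB side exactly via the 2-design identity, and bound the arbitrary-basis side using Cauchy--Schwarz together with the per-basis constraint $\sum_j |\langle X,v_{ij}\rangle|^2 = 1$. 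I expect this polynomial-approximation step to be the real sticking point; should the clean pointwise inequality fail for some small $t$, one would still hope to recover the integrated expectation inequality by combining the large-$t$ bound on $[\sqrt{d/(2d+1-2\sqrt d)},1]$ (which has a positive gap from the arbitrary-basis side by Theorem~11-style reasoning) with a tighter accounting of the small-$t$ contribution.
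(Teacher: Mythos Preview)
The first thing to note is that the paper does \emph{not} prove this statement: it is explicitly labeled a Conjecture, and the only supporting evidence the paper offers is the sentence ``Numerical analysis by the author strongly support the conjecture.'' So there is no paper proof to compare against; your proposal is an attempt at an open problem.

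As for the proposal itself, the large-$t$ regime $t \ge \sqrt{d/(2d+1-2\sqrt d)}$ is fine, but it is exactly the content of the preceding theorem in the paper and adds nothing new. The genuine difficulty is the small-$t$ regime, and there your argument has real gaps.

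First, on the pairwise term: you claim MUB minimize $\sum V\bigl(De(v_{ij},t)\cap De(v_{i'j'},t)\bigr)$ because their cross inner-product magnitudes are ``extremal.'' But for \emph{any} two orthonormal bases $\{v_j\}$, $\{w_{j'}\}$ one has the fixed constraint $\sum_{j,j'} |\langle v_j,w_{j'}\rangle|^2 = d$, so MUB merely equalize these values at $1/d$ rather than minimize them. Whether equalization minimizes the sum of pairwise cap-intersection volumes depends on the \emph{convexity} of $|\langle v_1,v_2\rangle|^2 \mapsto V\bigl(De(v_1,t)\cap De(v_2,t)\bigr)$, which you have not established (and which is not obvious; the function is zero on an interval and then increases to $\alpha(t)$).

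Second, and more seriously, your plan for the higher-order inclusion--exclusion terms via the $2$-design property cannot work as stated. A complex projective $2$-design gives exact averages only for polynomials of degree at most $2$ in the variables $|\langle X,v_{ij}\rangle|^2$. The indicator of $\{\max_{i,j}|\langle X,v_{ij}\rangle|^2 \ge t^2\}$ is not well approximated by such low-degree polynomials uniformly in $t$, and any useful polynomial approximation of a max or indicator over $kd$ variables will require degree growing with $kd$, far beyond what the $2$-design controls. You acknowledge this is ``the real sticking point,'' and indeed it is: without a concrete mechanism here, the proposal is a strategy outline rather than a proof. The fallback of balancing a positive gap on large $t$ against a possible deficit on small $t$ is likewise only a hope, since you give no quantitative bound on either side.
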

Numerical analysis by the author strongly support the conjecture.

When the autocorrelation matrixes $Rx$ of $X$ is known, Principal
Component Analysis(PCA) \cite{PCA,PCA2} really works well. However,
it's hard to change the PCA base when $Rx$ is changed. It's
interesting to consider MUB when Rx is known, and choose the
unbiased bases following the information of the complete spectra of
$Rx$. As the discussion above, we could treat $X$ a bunch of
independent random vectors from different domains. So engineers only
need to choose the bases which have nice spectra to get an average
optimization. Theorem $5$ implies that some inaccuracy about the
autocorrelation matrixes won't effect much. But theorem $3$ says
that there must be some MUB spectra of $Rx$ looks bad.


\begin{center}
\vskip 1em
 VIII. CONCLUSIONS
 \vskip 1em
\end{center}
\vskip 1em In this paper, we studied the subject about how to
analyze, process, and utilize the information in the second order
moments between random variables. We presented a number of
applications of this subject.
 However, many problems remain open, and we list some important ones here:

(i) What about the information in moments of order higher than $2$?

(ii)How do we find MUB when $d$ is not power of prime?
In particular, for prime dimension $d$, there are simple formulas to
compute MUB and has fast algorithm to do transformation, what can we say about
the case when $d$ is not prime?

(iii) What about the physical meaning of the nonfourier bases?

(iv)What kind of second moments filter(see subsection $C$ of Section
$III$) are physical realizable?

We should noticed that Symmetric Informationally Complete Sets
(SICs)\cite{SICFir,SICEXT} can do a similar job. But we don't know
whether SICs exists for dimension larger than $45$ complex linear
space. It should be interesting to ask which one (SICs or MUB) is
more fundamental to express discrete statistic signals.
\begin{center}
\vskip 1em
 IX. ACKNOWLEDGEMENTS
 \vskip 1em
\end{center}
\vskip 1em
 Hongyi Yao thanks Dr. Xiaoming Sun for introducing the
notation of MUB. Hongyi Yao thanks Dr. Xiaoming Sun,  Feng Han from
WIST Lab, Dr. Hao Zhang, Dr. Hui Zhou, and Dr. Xiao Zhang for
discussion. Hongyi Yao thanks Pro.Ker-I Ko for suggestions in
English writing. In particular, Hongyi Yao thanks Jue Wang for
generous support.
\bibliography{YaoG}

\bibliographystyle{plain}
\end{document}